\definecolor{Gred}{RGB}{219, 50, 54}
\definecolor{ToCgreen}{RGB}{0, 128, 0}
\title{Purest Quantum State Identification}
\author[1]{Yingqi Yu \thanks{Email:yingqiyu@mail.ustc.edu.cn}}
\author[1]{Honglin Chen \thanks{Email:chl777@mail.ustc.edu.cn}}
\author[1]{Jun Wu \thanks{Email:devilu@mail.ustc.edu.cn}}
\author[1]{Wei Xie \thanks{Email:xxieww@ustc.edu.cn \textit{Corresponding author}}}
\author[1,2]{Xiangyang Li \thanks{Email:xiangyangli@ustc.edu.cn \textit{Corresponding author}   }}
\affil[1]{School of Computer Science and Technology, University of Science and Technology of China}
\affil[2]{Hefei National Laboratory, University of Science and Technology of China, Hefei 230088, China}
\date{}
\theoremstyle{plain}
\newtheorem{theorem}{Theorem}[section]
\newtheorem{lemma}[theorem]{Lemma}
\newtheorem{problem}[theorem]{Problem}
\newtheorem{corollary}[theorem]{Corollary}
\theoremstyle{definition}
\newtheorem{definition}[theorem]{Definition}
\theoremstyle{remark}
\newenvironment{proof-sketch}{\begin{proof}[Proof Sketch]}{\end{proof}}
\newcommand{\indicator}[1]{\mathds{1}#1}
\newcommand{\bx}{{\boldsymbol x}}
\newcommand{\Tr}{\mathrm{Tr}}
\newcommand{\Var}{\mathrm{Var}}
\newcommand{\Haar}{\mathrm{Haar}}
\newcommand{\diag}{\mathrm{diag}}
\DeclarePairedDelimiter{\brk}{[}{]}
\def\Pr{\@ifnextchar[{\@witha}{\@withouta}}
\def\@witha[#1]{\mathop{{}\operator@font Pr}_{#1}\brk}
\def\@withouta{\mathop{{}\operator@font Pr}\brk}
\def\E{\@ifnextchar[{\@withb}{\@withoutb}}
\def\@withb[#1]{\mathop{{}\mathbb{E}}_{#1}\brk}
\def\@withoutb{\mathop{{}\mathbb{E}}\brk}
\renewcommand{\paragraph}{%
  \@startsection{paragraph}{4}%
  {\z@}{1.25ex \@plus 1ex \@minus .2ex}{-1em}%
  {\normalfont\normalsize\bfseries}%
}
\DeclareMathAccent{\wtilde}{\mathord}{largesymbols}{"65}
\begin{document}

\clearpage
\pagestyle{plain}
\pagenumbering{arabic}

\maketitle

\begin{abstract}

Quantum noise constitutes a fundamental obstacle to realizing practical quantum technologies. To address the pivotal challenge of identifying quantum systems least affected by noise, we introduce the purest quantum state identification, which can be used to improve the accuracy of quantum computation and communication. We formulate a rigorous paradigm for identifying the purest quantum state among $K$ unknown $n$-qubit quantum states using total $N$ quantum state copies. For incoherent strategies, we derive the first adaptive algorithm achieving error probability $\exp\left(- \Omega\left(\frac{N H_1}{\log(K) 2^n }\right) \right)$, fundamentally improving quantum property learning through measurement optimization. By developing a coherent measurement protocol with error bound $\exp\left(- \Omega\left(\frac{N H_2}{\log(K) }\right) \right)$, we demonstrate a significant separation from incoherent strategies, formally quantifying the power of quantum memory and coherent measurement. Furthermore, we establish a lower bound by demonstrating that all strategies with fixed two-outcome incoherent POVM must suffer error probability exceeding $ \exp\left( - O\left(\frac{NH_1}{2^n}\right)\right)$. This research advances the characterization of quantum noise through efficient learning frameworks. Our results establish theoretical foundations for noise-adaptive quantum property learning while delivering practical protocols for enhancing the reliability of quantum hardware.

\end{abstract}

\section{Introduction}

Quantum computers possess the potential to solve specific problems with significantly greater efficiency than classical computers \cite{shor1994algorithms,daley2022practical}. However, as quantum computing currently resides in the Noisy Intermediate-Scale Quantum (NISQ) era, the number of quantum bits (qubits) in quantum devices is constrained \cite{lau2022nisq,preskill2018quantum}. Furthermore, these qubits are susceptible to noise during operations, resulting in a lack of complete control \cite{google2023suppressing}. Numerous algorithms have been developed to enhance the precision of quantum operations to maximize the utility of existing quantum devices \cite{ramezani2020machine}. Nonetheless, owing to variations in device implementation and environmental factors, the effectiveness of these algorithms differs markedly among various quantum devices \cite{gyongyosi2019survey}. Thus, identifying the optimal quantum device, quantum algorithm, or quantum channel presents a critical issue. To achieve this, we must measure quantum states in multiple quantum systems, analyze their properties, and select the best one with the most significant probability.

However, quantum learning is challenging \cite{wright2016learn, montanaro2013survey, aaronson2018online, chen2022tight, chen2022toward, bubeck2020entanglement, fawzi2023quantum}. The properties of superposition and entanglement in quantum systems result in exponential growth of the state space as the qubit number $ n $ increases \cite{gyongyosi2019survey}. Consequently, the complexity of measuring the state of a quantum system escalates rapidly with the qubit number. When aiming to capture all the information of an $ n $-qubit quantum system, the learner takes $ \Theta(2^{3n})$ samples and measurements \cite{chen2023does}. Therefore, identifying the best quantum system through tomography will incur substantial costs. Determining how to distribute limited measurements across multiple unknown quantum systems — and how to choose the right measurement basis — remains a key challenge in identifying the most suitable quantum system for a given task.

This paper investigates the problem of the purest quantum state identification (PQSI). Pure quantum states are critically important in quantum computing and quantum communication, serving as a fundamental requirement for various quantum algorithms. Furthermore, identifying the quantum state least affected by noise is a significant issue , especially in the NISQ era. The problem PQSI has a wide range of application scenarios, including quantum state preparation \cite{plesch2011quantum}, selection of quantum channels \cite{chen2023unitarity}, and initialization of quantum algorithms \cite{shor1994algorithms}. Addressing this problem will substantially contribute to the advancement of quantum technology. %

This paper makes the following key contributions:
    
\textbf{Problem Model.} To the best of our knowledge, this is the first study dedicated to the best quantum state identification. In this paper, we focus on the issue of the purest quantum state identification, i.e., identifying the purest quantum state from a set of available quantum states. In this problem, while allocating copies to different quantum systems, the learner also needs to select the basis for quantum measurement, which significantly increases its complexity. Due to the limited number of qubits in existing quantum devices, these systems may not be able to facilitate large-scale quantum measurement \cite{yu2021sample, heshami2016quantum}. Consequently, while coherent (multi-copy) measurement techniques can more efficiently acquire information about quantum states, research on incoherent (single-copy) measurement is more practical. We formalize the problem of purest quantum state identification with incoherent (single-copy) measurement as follows:

\begin{problem}[Purest quantum state identification (PQSI) with incoherent (single-copy) measurement]\label{problem:PQSI_restatement} 
    There is a set of $K$ unknown $n$-qubit quantum states represented as $ S = \{ \rho_1, \ldots, \rho_K \} $. The learner aims to identify the purest quantum state in $S$ using a total of $N$ copies across all states. The problem protocol at round $t \in \{1, ..., N\}$ is as follows:
    
    \ \ \textbullet \ The learner chooses a quantum state $\sigma_t$ from the set $S$ and gets a copy of it.
    
    \ \ \textbullet \ The learner selects a POVM and uses it to measure $\sigma_t$, after which $\sigma_t$ is destroyed.
    
    Upon completing $ N $ measurements, the learner selects a quantum state $ \rho' \in S$ based on the measurement outcomes. The objective of the learner is to maximize the probability of identifying the purest quantum state, i.e., $ \mathbb{P}(\rho' \in \arg\max_{\rho \in S} \Tr(\rho^2)) $.

\end{problem}

In general, when coherent measurements are available, we formalize the problem as follows:
    
\begin{problem}[Purest quantum state identification(PQSI) with coherent (multi-copy) measurement]\label{problem:PQSI}
    There is a set of $K$ unknown $n$-qubit quantum states represented as $ S = \{ \rho_1, \ldots, \rho_K \} $. The learner aims to identify the purest quantum state in $S$ using a total of $N$ copies across all states. Let $ N_t $ denote the number of quantum state copies that remain available in round $ t $ and $N_1 = N$. When $N_t > 0 $, the problem protocol at round $t\in \mathbb{N}^+$ is as follows:
    
    \ \ \textbullet \ The learner decides the number of copies of the quantum state $ \rho_i $ to acquire, denoted as $ s(i,t) $, where $ s(i,t) \geq 0 $ and $ 1 \leq \sum_{i=1}^K s(i,t) \leq N_t $. The quantum state constructed from these copies can be represented as $ \sigma_t =  \rho_1^{\otimes s(1,t)} \otimes \ldots \otimes \rho_K^{\otimes s(K,t)}$. Let $ N_{t+1} = N_t - \sum_{i=1}^K s(i,t) $.
    
    \ \ \textbullet \ The learner selects an entangled POVM and uses it to measure $\sigma_t$,  after which $\sigma_t$ is destroyed.
    
    Upon completing measurements of all $N$ copies of quantum states, the learner selects a quantum state $ \rho' \in S$ based on the measurement outcomes. The objective of the learner is to maximize the probability of identifying the purest quantum state $ \mathbb{P}(\rho' \in \arg\max_{\rho \in S} \Tr(\rho^2)) $.
\end{problem}

\textbf{Algorithms.} We develop two distinct algorithms to address this problem in different settings. To simplify the expression, for $i \in \{1,...,K\}$, let $\rho_{(i)}$ be the $i$-th purest quantum state in $S$ and
    $\Delta_{(i)} =\Tr\left(\rho^2_{(1)}\right) - \Tr\left(\rho^2_{(i)}\right).$
In scenarios where only single-copy measurements are available, we developed the incoherent measurement based algorithm IM-PQSI. We employ Haar unitary matrices to generate random measurement basis and to measure quantum states in $ S $. This measurement approach allows us to evaluate the purity of quantum states by analyzing measurement expectations and variances. The Haar unitary matrices utilized in IM-PQSI belong to the class of unitary 4-designs, which can be efficiently simulated on a quantum computer up to inverse-exponential trace distance \cite{ma2024construct}. During the evaluation of quantum state purity, our algorithms dynamically allocate copies to various quantum states. By assigning more copies to states with higher purity, we enhance the estimation accuracy of these states, thereby increasing the likelihood of identifying the purest quantum state. Furthermore, we improve the performance of the algorithm for identifying the optimal quantum state by balancing the selection of measurement basis with the number of measurements conducted for each base. The error probability of this algorithm satisfies the following theorem:

\begin{theorem}[Informal version of Theorem \ref{thm:alg_IMPQSI}] 
    There exists an algorithm that solves the problem of the purest quantum state identification with incoherent measurement whose error probability satisfies 
    \begin{equation}
        e_N \leq \exp\left(- \Omega\left(\frac{N H_1}{\log(K) 2^n }\right) \right),
    \end{equation}
    where $H_1 = \min_{i \in \{2,...,K\}} \frac{\Delta_{(i)}}{i}$.
\end{theorem}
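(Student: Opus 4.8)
The plan is to equip IM-PQSI with an unbiased, single-copy estimator of $\Tr(\rho^2)$ built from randomized measurements, prove a tail bound for it whose rate carries the $2^n$ factor, and then embed this estimator in a Successive-Rejects--style elimination scheme so that the $\log(K)$ and the per-arm weighting $1/i$ in $H_1$ emerge from the budget accounting. For a fixed state $\rho$ I draw $U$ from the $4$-design, apply it, and measure in the computational basis, so that outcome $b$ occurs with probability $p_b(U)=\langle b|U\rho U^\dagger|b\rangle$. The $2$-design moment identity $\mathbb{E}_U\sum_b p_b(U)^2=\frac{1+\Tr(\rho^2)}{2^n+1}$ shows that $(2^n+1)\widehat C-1$ is unbiased for $\Tr(\rho^2)$, where $\widehat C$ is the U-statistic counting coincidences among repeated outcomes obtained in a common basis. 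The first step is to make this estimator and its unbiasedness explicit.

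The technical heart is the variance of $\widehat C$ and the allocation of a per-state budget $m=b\cdot r$ between $b$ distinct random bases and $r$ repetitions per basis. The variance decomposes into a between-basis term $\Var_U\!\big(\sum_b p_b(U)^2\big)$ and a within-basis sampling term; evaluating the former requires the fourth-order Haar moments of $U$, which is exactly why IM-PQSI needs a $4$-design rather than a mere $2$-design, and I would compute it by Weingarten calculus. The two terms scale oppositely in $r$, so I would optimize the split: a naive fresh-basis-per-pair choice ($r=2$) yields variance of order $2^n/m$ and hence only a squared-gap rate, whereas balancing $b$ against $r$ sharpens the dependence on the gap. Feeding the resulting variance and range into a Bernstein/Bennett tail bound, I aim to show that after devoting $m_i$ copies to the comparison of the true best state with the $i$-th purest one, the probability of inverting their order is at most $\exp\!\big(-\Omega(m_i\,\Delta_{(i)}/2^n)\big)$. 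Obtaining the first power of $\Delta_{(i)}$ here, rather than the crude $\Delta_{(i)}^2$, is the main obstacle, because the collision estimator is heavy-tailed (small variance but range of order $2^n$), so the tail bound must be handled with truncation or a Bennett-type argument together with the optimized $r$.

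With the per-comparison bound in hand, I would run the estimator inside a phased elimination algorithm that, across $\approx\overline{\log}(K):=\tfrac12+\sum_{j=2}^{K}\tfrac1j=\Theta(\log K)$ phases, permanently discards the apparently least-pure surviving state and reinvests the remaining budget in the survivors, thereby allocating more copies to higher-purity states. Following the Successive-Rejects accounting, the number of copies ultimately spent separating the best state from the $i$-th purest scales like $N/\big(i\,\overline{\log}(K)\big)$; substituting this $m_i$ into the tail bound of the previous step gives a failure probability $\exp\!\big(-\Omega(\tfrac{N}{\log(K)\,2^n}\cdot\tfrac{\Delta_{(i)}}{i})\big)$ for each suboptimal state $i$. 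A union bound over the $K-1$ suboptimal states and the phases is dominated by the worst index, i.e.\ by $\min_{i\ge2}\Delta_{(i)}/i=H_1$, yielding $e_N\le\exp(-\Omega(N H_1/(\log(K)2^n)))$. A secondary difficulty is reconciling the data-dependent allocation with the fixed-budget Successive-Rejects bound, which I would address by fixing the phase schedule in advance and letting only the identities of the surviving states be adaptive.
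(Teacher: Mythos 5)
Your proposal follows essentially the same route as the paper's proof: the same random-basis collision estimator with expectation $\frac{(m-1)(1+\Tr(\rho^2))}{m(d+1)}$, the same fourth-moment variance bound $O\left(\frac{1}{d^3}+\frac{1}{m^2d}+\frac{1}{md^2}\right)$ feeding a Bernstein tail estimate (the paper's choice $m=\lceil 1/\sqrt{c}\rceil$ is exactly your optimized bases-versus-repetitions split, and the linear-in-$\Delta_{(i)}$ rate likewise holds there under the assumption that all gaps exceed a constant $c>1/d^2$, which the informal statement hides), and the identical Successive-Rejects schedule with $\overline{\log}(K)=\frac{1}{2}+\sum_{j=2}^{K}\frac{1}{j}$, per-phase budgets $N_k$, and a union bound over comparisons. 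There is no substantive difference between your plan and the paper's argument.
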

    
Conversely, when multi-copy measurements are accessible, we developed the coherent measurement based algorithm CM-PQSI. We use the SWAP test to estimate the purity of quantum states in this algorithm, and its error probability satisfies the following theorem:
    
\begin{theorem}[Informal version of Theorem \ref{thm:alg_CMPQSI_coherent}] 
    There exists an algorithm that solves the problem of the purest quantum state identification with coherent measurement whose error probability satisfies 
    \begin{equation}
        e_N \leq \exp\left(- \Omega\left(\frac{N H_2}{\log(K) }\right) \right),
    \end{equation}
    where $H_2 = \min_{i \in \{2,...,K\}} \frac{\Delta^2_{(i)}}{i}$.
\end{theorem}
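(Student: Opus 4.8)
The plan is to instantiate the classical Successive Rejects framework for fixed-budget best-arm identification, treating each state $\rho_i$ as an ``arm'' whose unknown mean reward is its purity $\mu_i = \Tr(\rho_i^2)$, and using the SWAP test as the sampling oracle. Recall that the SWAP test applied to two copies of $\rho_i$ outputs a single bit that equals $0$ with probability $\tfrac{1+\Tr(\rho_i^2)}{2}$; hence, writing $X\in\{0,1\}$ for the outcome, the variable $Y = 1-2X \in \{-1,+1\}$ has $\E{Y} = \Tr(\rho_i^2) = \mu_i$. Thus one SWAP test (consuming two copies) gives an unbiased, bounded estimator of the purity, and averaging $m$ independent tests yields $\wh\mu_i$ with $\E{\wh\mu_i}=\mu_i$.

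First I would fix the phase schedule as in Successive Rejects. Set $\overline{\log}(K) = \tfrac12 + \sum_{i=2}^K \tfrac1i = \Theta(\log K)$ and split the protocol into $K-1$ phases, where by the end of phase $k$ each of the $K-k+1$ still-active arms has received $m_k = \big\lceil \tfrac{1}{\overline{\log}(K)}\cdot \tfrac{N/2 - K}{K+1-k}\big\rceil$ SWAP tests; dividing by $2$ accounts for the two copies consumed per test, and one checks $\sum_{k=1}^{K-1}(K-k+1)(m_k-m_{k-1}) \le N/2$, so the copy budget $N$ is respected. At the end of each phase the arm with the smallest empirical purity $\wh\mu_i$ is eliminated, and the lone survivor after phase $K-1$ is returned as $\rho'$.

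Next comes the concentration step. Since each per-test variable lies in $[-1,1]$, Hoeffding's inequality gives $\Pr{|\wh\mu_i - \mu_i| > \epsilon} \le 2\exp(-\Omega(m\epsilon^2))$ after $m$ tests, with a constant independent of the dimension $2^n$; this dimension-independence is exactly what removes the $2^n$ factor of the incoherent theorem and drives the claimed separation. Following the Successive Rejects argument, the purest arm $\rho_{(1)}$ is wrongly rejected in phase $k$ only if its estimate is beaten by some still-active suboptimal arm, and a pigeonhole argument shows at least one such survivor has purity gap at least $\Delta_{(K-k+1)}$. A union bound over the $\binom{K}{2}$ relevant (phase, arm) pairs then gives $e_N \le \tfrac{K(K-1)}{2}\exp\!\big(-\Omega\big(\tfrac{N}{\overline{\log}(K)}\min_{i}\tfrac{\Delta_{(i)}^2}{i}\big)\big)$, and absorbing the polynomial prefactor into the exponent yields the stated bound with $H_2 = \min_i \Delta_{(i)}^2/i$.

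The main obstacle is the accounting that couples the phase lengths to the correct gaps: one must verify that when $\rho_{(1)}$ is threatened at phase $k$ the product $m_k\,\Delta_{(K-k+1)}^2$ is uniformly lower bounded by $\tfrac{N/2-K}{\overline{\log}(K)}\min_i \tfrac{\Delta_{(i)}^2}{i}$, which is precisely where the index $i = K-k+1$ and the harmonic normalization $\overline{\log}(K)$ conspire to produce $H_2$. This reindexing---matching surviving-set sizes to the order statistics of the gaps---is the delicate combinatorial heart of the proof; by contrast, the SWAP-test reduction and the Hoeffding bound are routine once the bounded unbiased purity estimator is in place.
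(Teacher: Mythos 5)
Your proposal is correct and follows essentially the same route as the paper: a Successive Rejects schedule with harmonic normalization $\overline{\log}(K)$, the SWAP test as a bounded unbiased purity estimator (two copies per test), Hoeffding concentration, and a union bound over phases and surviving arms, yielding $e_N \leq \frac{K(K-1)}{2}\exp\left(-\Omega\left(\frac{NH_2}{\overline{\log}(K)}\right)\right)$. The only difference is cosmetic: you recenter the SWAP-test bit as $Y = 1-2X$ so that higher empirical mean corresponds to higher purity, which in fact fixes a sign-convention slip in the paper's elimination rule, whose proof implicitly assumes the same orientation.
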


By comparing the error probability upper bound of these two algorithms, we can identify the advantages of quantum memory.

\textbullet \  \textbf{Lower Bound.} Analyzing the complexity of problems related to quantum testing is usually challenging. These problems often use Haar unitary matrices to construct specific cases that are difficult to distinguish, reflecting the complexity of these issues \cite{anshu2022distributed, gong2024sample}. However, the representation-theoretic structure of Haar unitary matrices is complicated, which makes them difficult to analyze. When distinguishing quantum states from two alternative sets, previous work usually assumes that one of them is the maximally mixed state or pure state to reduce the difficulty of analysis. In our problem, the learner must frequently distinguish between quantum states with different purity, making this analysis method unsuitable. 

For incoherent measurements, we utilize the properties of Haar unitary matrices to formulate a related problem named Purest Random Quantum State Identification (PRQSI). In this context, the learner is required to consider a specific quantum state distribution constructed from Haar unitary matrices. We demonstrate that the lower bound of the error probability for the PRQSI problem is also applicable as the lower bound for the error probability of the PQSI problem. Furthermore, we show that the measurement outcomes generated by three-fourths of the Haar unitary matrices are insufficient for the learner to easily differentiate among them, presenting a challenge in identification. Additionally, we demonstrate that in any PRQSI problem employing a fixed two-outcome Positive Operator-Valued Measure, the measurement outcomes exhibit a Bernoulli distribution that is inherently difficult to distinguish. Consequently, we derive the lower bounds for the error probability of the PRQSI, which are also the lower bounds for the error probability of the PQSI problem, as follows:

\begin{theorem}[Informal version of Theorem \ref{thm:PQSI lower bound}] 
    For any algorithm $\mathcal{A}$ to solve the purest quantum state identification using fixed 2-outcome randomly incoherent POVM, there exists a set of quantum states which makes the error probability of $\mathcal{A}$ satisfies
    \begin{equation}
        e_N \geq \exp\left( -O\left(\frac{NH_1}{2^n}\right)\right),
    \end{equation}
    where $H_1 = \min_{i \in \{2,...,K\}} \frac{\Delta_{(i)}}{i}$.
\end{theorem}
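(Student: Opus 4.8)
The plan is to reduce the worst case to a Haar average, linearize the measurement to a Bernoulli test, and then run an information-theoretic two-point/Fano argument. First I would fix diagonal spectra $D_{(1)},\dots,D_{(K)}$ realizing the target purities (hence the gaps $\Delta_{(i)}$), set $\rho_i=U_iD_{\pi(i)}U_i^\dagger$ with i.i.d.\ Haar $U_i$ and a uniformly random relabeling $\pi$ of which spectrum each state receives, and observe that the error of any algorithm on this ensemble lower-bounds its worst-case error; this is exactly the reduction to PRQSI. Since the claimed bound will hold for every fixed POVM, averaging over the learner's private randomness (Yao) lets me treat the POVM $\{M,\Id-M\}$ as deterministic. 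Under this reduction each copy of $\rho_i$ measured with $M$ returns an outcome that is Bernoulli with parameter $p_i=\Tr(M\rho_i)$, and the whole problem becomes: identify which state carries the purest spectrum from Bernoulli samples whose parameters are fixed functions of the hidden Haar bases.

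The engine of the lower bound is the moment structure of $p_i$. I would use the first two Haar moments (Weingarten calculus) to get $\bE_{U_i}[p_i]=\Tr(M)/2^n=:\mu$, independent of the spectrum, and $\Var_{U_i}[p_i]=\Theta\!\left(\Tr(\wt{M}^2)\,(\Tr(\rho_i^2)-2^{-n})/2^{2n}\right)$, where $\wt{M}=M-\mu\,\Id$. The decisive point is that the mean is purity-independent, so there is no first-order signal; all distinguishing power sits in a variance term that is suppressed by $2^{-n}$ and is linear in the purity. The constraint $0\preceq M\preceq\Id$ gives $\Tr(\wt{M}^2)\le \Tr(M)\bigl(1-\Tr(M)/2^n\bigr)=2^n\mu(1-\mu)$, which both caps the available signal and supplies the $1/2^n$ factor. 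I would then invoke L\'evy's lemma on $U(2^n)$: the map $U\mapsto\Tr(MUDU^\dagger)$ is $O(\|D\|_{\mathrm{op}})$-Lipschitz, so with probability at least $3/4$ over $U_i$ the realized $p_i$ lies in an $O(2^{-n/2})$ window about $\mu$. This is the formal content of the statement that three-fourths of the Haar unitaries yield indistinguishable outcome laws.

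On this typical event I would bound the per-copy distinguishability between two purity levels. Conditioning on the Haar bases, the count of outcomes for a state measured $n_i$ times is Binomial, and marginalizing over $U_i$ its only spectrum-dependence enters through $\Var_{U_i}[p_i]$; feeding the variance bound and $\Tr(\wt{M}^2)\le 2^n\mu(1-\mu)$ into a chi-squared (or Hellinger) estimate gives an information contribution of order $\Delta_{(i)}/2^n$ per copy. Summing over the copies spent on the relevant states and combining by a Bretagnolle--Huber two-point bound (for a single confusable pair) or Fano's inequality (for the $K$-way identification) yields a total information of $O(NH_1/2^n)$ and hence $e_N\ge \exp\!\bigl(-O(NH_1/2^n)\bigr)$. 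The complexity $H_1=\min_{i}\Delta_{(i)}/i$ emerges by choosing, as the hard instance, the purity level whose top $i$ states are mutually closest and splitting the budget among them, so that the bottleneck state receives $\approx N/i$ copies and contributes $\approx (N/i)\,\Delta_{(i)}/2^n$.

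I expect the main obstacle to be obtaining the correct linear-in-$\Delta_{(i)}$ scaling. Because the mean is purity-independent, the naive second-moment estimate is quadratic, $O\bigl((N\Delta_{(i)}/2^n)^2\bigr)$, which is both lossy and mis-scaled; the crux is to combine the Haar second-moment identities with the L\'evy concentration so that in the meaningful budget regime $N\lesssim 2^n/H_1$—where the linear and quadratic exponents coincide—the accumulated information is genuinely $O(NH_1/2^n)$, and then to propagate this through the $K$-way structure to get the $\min_i\Delta_{(i)}/i$ complexity rather than a weaker single-pair bound. Controlling the overdispersed count distribution uniformly over all admissible (possibly adaptive) allocations $\sum_i n_i=N$, while keeping the $1/2^n$ suppression sharp, is where the care is needed.
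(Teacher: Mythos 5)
Your reduction from worst case to a Haar-random ensemble matches the paper's Lemma~\ref{lem:problem transformation}, and the Haar moment computations are in the right spirit, but your construction breaks the argument at its central quantitative step. Because you rotate each state by an \emph{independent} Haar unitary $U_i$, the single-copy outcome law of every state is exactly the same: $\mathbb{E}_{U_i}[\Tr(M\rho_i)]=\Tr(M)/2^n$ for every spectrum, so one copy carries \emph{zero} information and the claim that a chi-squared/Hellinger estimate gives ``an information contribution of order $\Delta_{(i)}/2^n$ per copy'' cannot hold for this ensemble. All the distinguishing power sits in correlations among copies of the same state (they share one hidden realization $p_i=\Tr(M\rho_i)$), and a correct computation of the divergence between the resulting mixture-of-binomial laws gives a quantity that is quadratic in the variance gap (hence quadratic in $\Delta_{(i)}$), grows like $N^2$ while the sampling noise dominates, and then \emph{saturates} at a constant determined by the ratio of the variances; it never accumulates linearly as $N\Delta_{(i)}/2^n$. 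You acknowledge precisely this obstacle in your final paragraph, but the proposed fix is not a resolution: in the regime $N\lesssim 2^n/H_1$ both exponents are $O(1)$ and the bound is vacuously a constant, while outside that regime your information estimate is simply not of the claimed form. (One could try to turn the saturation into a constant lower bound, but that needs anti-concentration, i.e.\ two-sided density control of $\Tr(MU_iD_iU_i^\dagger)$, which L\'evy/Chebyshev concentration does not provide.)

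The idea you are missing is the paper's choice of hard instance (Problem~\ref{problem:PQDSI_special}): all $K$ states share a \emph{single} Haar unitary, $\tau_k(U)=\alpha_k U|0\rangle\langle 0|U^\dagger+\frac{1-\alpha_k}{d}I_d$ with $\alpha_k=\sqrt{(dz_k-1)/(d-1)}$. Conditioned on $(U,\mathcal{M})$, the outcomes are Bernoulli with parameters $\Tr(M)/d+c(\mathcal{M},U)\alpha_k$, i.e.\ the purity enters the \emph{mean}, linearly in $\alpha_k$, through a common coefficient $c(\mathcal{M},U)$ with $|c(\mathcal{M},U)|\le 2\sqrt{\Tr(M)}/d$ on a set of $U$ of measure at least $3/4$ (Lemma~\ref{lem:larger_expectation}, via Haar moments and Chebyshev). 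The conditional problem is then an honest classical Bernoulli best-arm identification instance, and the paper imports the Audibert--Bubeck--Munos lower bound, whose complexity involves \emph{squared} parameter gaps; the linear-in-$\Delta_{(i)}$ scaling falls out of the elementary inequality $\left(\sqrt{dz_{i^\star}-1}-\sqrt{dz_{(i)}-1}\right)^2\le d\Delta_{(i)}$, and the $1/2^n$ factor from $c^2(\mathcal{M},U)/\left(a(1-a)\right)=O(1/d)$ with $a=\Omega(\Tr(M)/d)$. This shared-unitary, square-root parametrization is what converts a second-moment phenomenon into a first-order gap; with independent unitaries, as in your proposal, the first-order signal vanishes identically and no argument along the lines you sketch yields the exponent $\exp\left(-O\left(NH_1/2^n\right)\right)$.
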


\paragraph{Structure of the paper.} In Section \ref{sec: related_work}, we review relevant literature and discuss previous work related to our research. In Section \ref{sec: pre and model}, we provide preliminaries and notations used throughout this paper. In Section \ref{sec: PQSI_inco}, we give the algorithm to solve the purest quantum state identification with incoherent measurement. Section \ref{sec: lower_bound} gives the error probability lower bound for any algorithm $\mathcal{A}$ to solve the PQSI using two-outcome randomly incoherent POVM. We give the algorithm to solve the PQSI with coherent measurement in Section \ref{sec: PQSI_co}. Finally, we summarize the paper's content and present some related open problems in Section \ref{sec: conclusion}.

\section{Related Work}
\label{sec: related_work}

The problem of Purest Quantum State Identification (PQSI) can be viewed as learning the properties of a set of quantum states.

\paragraph{Quantum learning and testing.}  Quantum learning and testing \cite{montanaro2013survey, aharonov2022quantum} is a vital area of research in quantum computing and quantum communication. There are extensive investigations conducted to understand the complexities of various measurements. 

Quantum state tomography \cite{banaszek2013focus, gross2010quantum, chen2023does, chen2022exponential} involves obtaining complete information about the density matrix of a quantum state through measurements. While this technique can be employed to tackle the PQSI problem, it incurs significant sampling costs. For the quantum state certification \cite{bubeck2020entanglement,chen2022tight,wright2016learn,chen2022toward}, the target is to determine whether a quantum state is close to a specific target quantum state. Our problem can be viewed as identifying the quantum state that is the farthest from the maximally mixed state. However, this problem is focused on quantum testing and does not deal with distance estimation. Therefore, these methods cannot be applied to the PQSI problem. Another category of problems relates to inner product estimation between two quantum states \cite{anshu2022distributed,hinsche2024efficient, zhu2022cross,huang2020predicting}. When proving lower bounds, this category often significantly restricts the quantum states for distinction. 

The relevant literature employs two general approaches to establish problem complexity. The first approach involves constructing counterexamples using Haar unitary matrices \cite{anshu2022distributed,chen2022exponential,chen2022toward, bubeck2020entanglement}. However, the representation-theoretic structure of Haar unitary matrices is intricate \cite{mele2024introduction}, which makes it difficult to use. The other approach uses Gaussian Orthogonal Ensemble (GOE) matrices to create counterexamples \cite{chen2022tight,chen2023does}. However, when using GOE to prove the lower bound, the distance between quantum states is in a specific range rather than a fixed number, making it unsuitable for the PQSI problem. 


\paragraph{Classical Best arm identification.} To the best of our knowledge, our work is the first to consider the best quantum state identification. Among the classical learning tasks, the best arm identification \cite{audibert2010best, garivier2016optimal, russo2016simple, 6814096, gabillon2012best} has been extensively studied, and is divided into two categories: fixed budget \cite{bechhofer1968single} and fixed confidence \cite{paulson1964sequential}. However, the existing research can only deal with the problem under specific distributions. 
These limitations restrict the algorithm's applicability and leave considerable room for further research on this issue. In our problem, we must select an appropriate POVM basis while choosing the quantum state in each round. The quantum state space and the POVM space grows exponentially with the increase in qubits, which makes this problem significantly more challenging than solving a classical problem of the best arm identification.
\section{Preliminaries and Notations}
\label{sec: pre and model}

\begin{table}[t]
\caption{Description of commonly used-notations}
\label{table:notations}
\vskip 0.15in
\begin{center}
\begin{small}
\begin{tabular}{c|l}
\hline
Notation & Description \\
\hline
$n$  & the qubit number of the quantum state\\
\hline
$d$  & $d =2^n$ \\
\hline
$S$ & the set of the unknown quantum state \\
\hline
$K$ & $K = |S|$\\
\hline
$\rho_i$ & the $i$-th quantum state in $S$ \\
\hline
$\rho_{(i)}$ & the $i$-th purest quantum state in $S$ \\
\hline
$\rho^\star = \rho_{i^\star}$ & the purest quantum state in $S$ \\
\hline
$\Delta_i$ & $\Delta_i =\Tr(\rho_{i^\star}^2) - \Tr(\rho^2_{i})$\\
\hline
$\Delta_{(i)}$ & $\Delta_{(i)} =\Tr(\rho_{i^\star}^2) - \Tr(\rho^2_{(i)})$ \\
\hline
$\{|i \rangle\langle i|\}^{d-1}_{i=0}$ & a fixed orthogonal basis in $\mathbb{C}^{d \times d}$\\
\hline
$I_d$ & $d$-dimensional identity matrix \\
\hline
$\mathbb{U}(d)$ & the set of $d \times d$ unitary matrix\\
\hline
\end{tabular}
\end{small}
\end{center}
\vskip -0.1in
\end{table}

In this work, we will use Dirac's bra-ket notation, where $| v \rangle \in \mathbb{C}^d$ denotes a column vector and $\langle v| = | v \rangle^\dagger$. 
Specifically, for all $ i \in \{0,...,2^n-1\}$, let $| i \rangle$ denote a column vector whose $(i+1)$-th element is 1, and all other elements are 0. 
Then $\{|i\rangle \langle i | \}_{i=0}^{d-1}$ is a fixed orthogonal basis in $\mathbb{C}^{d \times d}$.

\paragraph{Quantum State.} Let $ d = 2^n $ denote the dimension of a $n$-qubit quantum system. An $ n $-qubit quantum state can be represented by a density matrix $ \rho \in \mathbb{C}^{d \times d} $, which is Hermitian and trace-1 positive semi-definite. In particular, an $n$-qubit pure quantum state can be represented by a unit vector $|\psi \rangle \in \mathbb{C}^d$. The purity of a quantum state $\rho$ is $\Tr(\rho^2)$.


\paragraph{Quantum Measurement.} Quantum measurements are usually described by a Positive Operator-Valued Measure (POVM), which produces probabilistic outcomes. The formal definition of a POVM is as follows:

\begin{definition}[Positive Operator-valued measurement (POVM), see e.g. \cite{nielsen2010quantum}]
    An $n$-qubit positive operator-valued measurement $\mathcal{M}$ can be represented as a collection of positive semi-definite matrices $\mathcal{M} =\{M_m\}_m$, where $M_m \in \mathbb{C}^{d\times d}$ and $\sum_m M_m = I_d$. When using $\mathcal{M}$ to measure a quantum state $\rho$, the probability of outcome $m$ is $\Tr(M_m\rho)$, and the quantum state $\rho$ is destroyed.
\end{definition}

When the coherent measurement method is employed, the learner can perform entanglement measurements on quantum states $ \rho_1 \otimes \ldots \otimes \rho_m $. However, this approach necessitates the support of large-scale quantum devices and quantum memory, which are not feasible with current quantum technologies. Therefore, researching incoherent measurement methods applicable to NISQ-era quantum devices is of great significance. 


In this study, we aim to identify the purest quantum state from a set of unknown quantum states, achieving the highest probability through $ N $ measurements. For the purpose of simplicity, we will assume that there exists a unique optimal quantum state that is the purest in the set $ S $, denoted as $ \mu^\star = \mu_{i^\star} $. For $ i \neq i^\star $, we represent the purity difference between each non-optimal quantum state and the optimal quantum state using the following expression:
\begin{equation*}
    \Delta_i = \Tr\left(\rho_{i^\star}^{2}\right) - \Tr\left(\rho_{i}^{2}\right).
\end{equation*}

For $i \in \{1,...,K\}$, let $\rho_{(i)}$ be the $i$-th purest quantum state in $S$, then we have
\begin{equation*}
    \Tr\left(\rho_{i^\star}^{2}\right) = \Tr\left(\rho_{(1)}^{2}\right) > \Tr\left(\rho_{(2)}^{2}\right) \geq...\geq \Tr\left(\rho_{(K)}^{2}\right),
\end{equation*}
and
\begin{equation*}
    \Delta_{(2)} \leq \Delta_{(3)} \leq ... \leq \Delta_{(K)}.
\end{equation*}

Let $e_N$ denote the probability that the learner does not choose the purest quantum state in $S$ after $N$ samples and measurements, i.e., $e_N = \mathbb{P}\left(\rho' \notin \arg\max_\rho \Tr(\rho^2)\right)$. The learner's objective is to $\min e_N$.

We summarize key notations used throughout this paper in Table \ref{table:notations}.
\section{Algorithm for PQSI with incoherent measurement}
\label{sec: PQSI_inco}

The current era of quantum computing presents limitations in the number of available qubits. This may hinder the measurement of multiple quantum states operated jointly. In this section, we use incoherent (single-copy) measurement methods in each round to select the purest quantum state from the set of quantum states.

The algorithm we designed to solve the PQSI problem with incoherent measurement is shown in Algorithm \ref{alg:PQSI_incoherent}. In our algorithm, we utilize Haar unitary matrices to construct multiple random measurement bases for measuring quantum state copies. For the measurement results obtained from each constructed measurement base, we demonstrate that measuring a quantum state $ m $ times with this base allows for estimating the quantum state's purity based on the probability of identical outcomes. Formally, for a $ d $-outcome POVM, let the result of the $ i $-th measurement be $ x_i \in \{0, \ldots, d-1\} $. We can estimate the purity of the quantum state using the probability of identical outcomes $ \tilde{g} = \frac{1}{m(m-1)} \sum_{i=1}^{m} \sum_{j \neq i} \indicator\{x_i = x_j\} $. By calculating the expectation and variance of $\tilde{g}$, we can demonstrate the probability of identifying the purest quantum state by our algorithm.

Algorithm \ref{alg:PQSI_incoherent} includes two random processes. The first process involves the random selection of Haar unitary matrices to construct measurement bases, while the second process entails the random acquisition of measurement results using these bases. Balancing the estimation errors introduced by these two processes is crucial for enhancing the algorithm's accuracy. When the purity difference between quantum states is substantial, each measurement base requires fewer instances to distinguish purity differences among states effectively. Conversely, when the purity difference is minimal, more measurements are necessary under a single measurement base to gather sufficient information, limiting the number of measurement bases that can be utilized. Furthermore, when the purity difference is extremely slight, the choice of measurement basis significantly influences the measurement outcomes. Performing $ \Theta(d^2) $ measurements under a single measurement base is essential for adequately differentiating the outcomes between quantum states. Additionally, increasing the number of measurements under one measurement base will not improve the accuracy of purity estimation. Given these considerations, we balance the number of measurement bases and the number of measurements per base to enhance the probability of identifying the optimal quantum state.

In Algorithm 1, we use Haar unitary matrices to construct random measurement bases. Haar unitary matrices are extensively employed in quantum property testing and learning theory \cite{brandao2021models, anshu2022distributed, huang2020predicting}. Achieving exact Haar randomness is challenging; therefore, unitary $t$-designs are typically used to approximate Haar unitary matrices. Haar unitary matrices used in Algorithm \ref{alg:PQSI_incoherent} can be approximated using unitary $4$-designs. As established in \cite{ma2024construct}, unitary $t$-designs can be efficiently implemented on quantum computers with exponentially minor approximation errors. This theoretical foundation ensures the practical feasibility and mathematical validity of employing Haar unitary matrices in our algorithmic framework.





\begin{algorithm}[htb]
    \caption{Incoherent measurement based algorithm for solving PQSI problem (IM-PQSI)}
    \label{alg:PQSI_incoherent}
    \begin{algorithmic}
    \STATE {\bfseries Input:} Copy access to $S = \{\rho_1,...,\rho_K\}$, sample number $N$.
    \STATE {\bfseries Initialization:} Set $S_0 = \{\rho_1,...,\rho_K\}$, $\overline{\log}(K) = \frac{1}{2} + \sum_{i=2}^K \frac{1}{i}$, $N_0=0$ and $N_k = \left\lceil \frac{1}{\overline{\log}(K)} \frac{N-K}{K+1-k}\right\rceil$, for $k \in \{1,...,K-1\}$. Sample $\lfloor {N}/m \rfloor$ random unitary matrix $U_1,...,U_{\lfloor N/m \rfloor}$ according to the Haar measure.
    \FOR{k=1,..., K-1}
        \FOR{$\rho \in S_{k-1}$ and $j \in \{\lfloor \frac{N_{k-1}}{m}\rfloor +1,..., \lfloor \frac{N_{k}}{m}\rfloor \}$}
        \STATE Measure $m$ copies of $\rho$ in the basis $\{U_j^\dagger|i\rangle \langle i | U_j\}_{i=0}^{d-1}$ and set the outputs as $x{(\rho, j,1)},..., x{(\rho, j,m)}$.
        \STATE Let $\tilde{g}(\rho,j) = \frac{1}{m^2} \sum_{i=0}^{d-1} \left[ \sum_{l=1}^m \indicator\{x(\rho,j,l) =i\}\right]^2 - \frac{1}{m}$.
        \ENDFOR
        \STATE  Let $w(\rho,k) = \frac{1}{\lfloor \frac{N_{k}}{m}\rfloor} \sum_{j=1}^{\lfloor \frac{N_{k}}{m}\rfloor} \tilde{g}{(\rho,j)}$.
        \STATE Let $S_k = S_{k-1} \setminus \arg \min_{\rho \in S_{k-1}} w(\rho,k) $.
    \ENDFOR
    \STATE Output the quantum state $\rho'$ in $S_{k-1}$.

\end{algorithmic}
\end{algorithm}


Then, we give the error probability upper bound of IM-PQSI in the following theorem.

\begin{theorem}
    \label{thm:alg_IMPQSI}
    For $ i \in \{1,...,K\}, \Delta_i \geq c > \frac{1}{d^2}$, where $c$ is a constant. Set $m=\lceil \frac{1}{\sqrt{c}} \rceil$ in Algorithm \ref{alg:PQSI_incoherent}. The error probability of Algorithm \ref{alg:PQSI_incoherent} satisfies
    \begin{equation*}
         e_N \leq \frac{K(K-1)}{2}\exp\left(-\Omega \left( \frac{\sqrt{c}N H_1}{\overline{\log}(K) d }\right) \right),
    \end{equation*}
    where $H_1 = \min_{i \in \{2,...,K\}} \frac{\Delta_{(i)}}{i}$.
\end{theorem}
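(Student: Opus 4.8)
The plan is to view IM-PQSI as a quantum realization of the successive-rejects algorithm and to reduce the analysis to three quantum-specific facts about the per-base collision statistic $\tilde g(\rho,j)$ together with a standard rejection accounting. First I would identify the statistic: writing $p_i(U)=\langle i|U\rho U^\dagger|i\rangle$ and expanding the square, the algorithm's quantity equals $\tilde g(\rho,j)=\frac{1}{m^2}\sum_{l\neq l'}\indicator\{x(\rho,j,l)=x(\rho,j,l')\}$, i.e.\ a normalized collision count. Conditioning on the basis $U_j$ gives $\mathbb{E}[\tilde g(\rho,j)\mid U_j]=(1-\tfrac1m)\sum_i p_i(U_j)^2$, and averaging over the Haar measure with the second-moment identity $\mathbb{E}_U[\sum_i p_i(U)^2]=\frac{1+\Tr(\rho^2)}{d+1}$ yields $\mathbb{E}[\tilde g(\rho,j)]=(1-\tfrac1m)\frac{1+\Tr(\rho^2)}{d+1}$. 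This is an increasing affine function of the purity, so the expected gap between the scores of the optimal state and of $\rho_{(i)}$ is $\delta_i:=(1-\tfrac1m)\frac{\Delta_{(i)}}{d+1}=\Theta(\Delta_{(i)}/d)$, the signal the algorithm must resolve.

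Next I would control the fluctuations of $\tilde g(\rho,j)$. The statistic is bounded in $[0,1)$, and by the law of total variance its variance splits into the measurement noise $\mathbb{E}_U[\Var(\tilde g\mid U)]$ and the basis fluctuation $\Var_U(\mathbb{E}[\tilde g\mid U])$. For the collision estimator on the (near-uniform) empirical distribution $\{p_i(U)\}$, a U-statistic computation gives $\Var(\tilde g\mid U)=O\!\left(\frac{\|p\|_2^2}{m^2}+\frac{\|p\|_3^3}{m}\right)=O\!\left(\frac{1}{m^2 d}\right)$, while a fourth-moment (Weingarten) calculation shows $\Var_U(\sum_i p_i^2)=O(1/d^3)$, which is subdominant whenever $m\le d$. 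I therefore expect a per-base variance bound $V=O(1/(m^2d))$.

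The score $w(\rho,k)=\frac{1}{n_k}\sum_{j=1}^{n_k}\tilde g(\rho,j)$, with $n_k=\lfloor N_k/m\rfloor$, is an average of $n_k$ independent, $[0,1)$-bounded copies of variance at most $V$, so Bernstein's inequality gives $\bP[\,|w(\rho,k)-\mathbb{E}w|\ge \delta_i/2\,]\le 2\exp(-\Omega(\min(n_k\delta_i^2/V,\; n_k\delta_i)))$. The crux is that the tuning $m=\lceil 1/\sqrt c\rceil$ forces $m^2\ge 1/c$, so that $m^2\Delta_{(i)}\ge \Delta_{(i)}/c\ge 1$ for every gap $\Delta_{(i)}\ge c$; substituting $V=O(1/(m^2d))$ and $\delta_i=\Theta(\Delta_{(i)}/d)$ then makes the Gaussian branch $n_k\delta_i^2/V=\Omega(n_k m^2\Delta_{(i)}^2/d)$ at least as large as the linear branch $n_k\delta_i=\Omega(n_k\Delta_{(i)}/d)$. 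Hence both branches are $\Omega(n_k\Delta_{(i)}/d)$, which is exactly what upgrades a naive $\Delta_{(i)}^2$ rate to the linear $\Delta_{(i)}$ rate appearing in $H_1$; this is where the hypothesis $c>1/d^2$ and the choice of $m$ enter.

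Finally I would carry out the successive-rejects accounting. The schedule $N_k=\lceil\frac{1}{\overline{\log}(K)}\frac{N-K}{K+1-k}\rceil$, with $\overline{\log}(K)=\frac12+\sum_{i=2}^K\frac1i$, is the classical successive-rejects schedule of \cite{audibert2010best} with copies in place of pulls and $m$ copies bundled per base, and Abel summation gives $\sum_k (N_k-N_{k-1})(K+1-k)\le N$, certifying the budget. If the optimal state is wrongly eliminated, then at some round $k$ a survivor $\rho_{(i)}$ with $i\ge K+1-k$ must have outscored it, an event of probability $\exp(-\Omega(n_k\Delta_{(i)}/d))$ by the previous paragraph; for the round that eliminates $\rho_{(i)}$ one has $n_k=\Theta(\frac{\sqrt c\,N}{\overline{\log}(K)\,i})$, so the exponent is $\Omega(\frac{\sqrt c\,N}{\overline{\log}(K)d}\cdot\frac{\Delta_{(i)}}{i})\ge\Omega(\frac{\sqrt c\,N H_1}{\overline{\log}(K)d})$ since $H_1=\min_i\Delta_{(i)}/i$. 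A union bound over the at most $\binom{K}{2}$ relevant (round, state) pairs then produces the prefactor $\frac{K(K-1)}{2}$ and the claimed bound. The step I expect to be the main obstacle is the variance estimate of the second paragraph: bounding $\Var(\tilde g\mid U)$ uniformly (including atypical bases where $\|p\|_2^2$ or $\|p\|_3^3$ is large) and executing the fourth-moment Haar computation cleanly enough to confirm it is subdominant, since these constants are precisely what license the use of Bernstein in its linear regime, and hence the $H_1$ rate.
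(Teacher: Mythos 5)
Your proposal is correct and follows essentially the same route as the paper's proof: the same Haar second-moment identity giving $\mathbb{E}[\tilde g]=(1-\tfrac1m)\frac{1+\Tr(\rho^2)}{d+1}$, the same per-base variance bound (the paper quotes $O(1/d^3+1/(m^2d)+1/(md^2))$ from \cite{anshu2022distributed}, which coincides with your $O(1/(m^2d))$ once $m\le d$), the same Bernstein application in which $m=\lceil 1/\sqrt c\rceil$ and $\Delta_{(i)}\ge c>1/d^2$ force the linear-in-$\Delta_{(i)}$ exponent $\Omega(\sqrt c\,N_k\Delta_{(i)}/d)$, and the same successive-rejects union bound over the $\binom{K}{2}$ (round, state) pairs. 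The only cosmetic difference is that you re-derive the variance lemma via the law of total variance rather than citing it, and your worry about needing a uniform-in-$U$ variance bound is unnecessary since only $\mathbb{E}_U[\Var(\tilde g\mid U)]$ enters that decomposition.
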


\begin{proof-sketch}
    The expectation of $w(\rho,k)$ and $\tilde{g}_{\rho,j}$ in Algorithm \ref{alg:PQSI_incoherent} satisfies 
    \begin{equation}
        \mathbb{E}[w(\rho,k)] = \mathbb{E}[\tilde{g}_{\rho,j}] = \frac{(m-1)(1+ \Tr(\rho^2))}{m(d+1)},
    \end{equation}
    and the variance of $\tilde{g}(\rho,j)$ satisfies
    \begin{equation}
        \Var(\tilde{g}(\rho,j)) = O\left(\frac{1}{d^3} + \frac{1}{m^2d} + \frac{1}{md^2}\right).
    \end{equation}
    By Bernstein's inequality and the union bound of error probability, we have 
    \begin{equation}
        e_n  \leq \frac{K(K-1)}{2} \exp\left( -\Omega\left(\frac{\sqrt{c}NH_1}{\overline{\log}(K)d}\right)\right),
    \end{equation} where $H_1 = \min_{i \in \{1,...,K\}} \frac{\Delta_{(i)}}{i}$. The proof details are provided in Appendix \ref{subsec:PQSI_alg1}.
\end{proof-sketch}

The dimension $d=2^n$ increases exponentially with the number of qubits $n$. As $n$ increases, $\frac{1}{d^2}$ tends to 0. Therefore, in Theorem \ref{thm:alg_IMPQSI}, we assume that for any quantum state $\rho \in S_\rho$, $\Tr(\rho^{\star2}) - \Tr(\rho^{2}) \geq c > \frac{1}{d^2}$. If we can not make this assumption, we can derive the following conclusion:

\begin{lemma}
    \label{lem2:PQSI_alg}
    Set $m=d$ in Algorithm \ref{alg:PQSI_incoherent}. The probability of error of Algorithm \ref{alg:PQSI_incoherent} satisfies
    \begin{equation*}
    \begin{aligned}
        e_N \leq \frac{K(K-1)}{2} \exp\left(- \Omega\left(\min\left(\frac{N H_2}{\overline{\log}(K)}, \frac{N H_1}{\overline{\log}(K)d^2}\right)\right) \right),
    \end{aligned}
    \end{equation*}
    where $H_1 = \min_{i \in \{2,...,K\}} \frac{\Delta_{(i)}}{i}$, and $H_2 = \min_{i \in \{2,...,K\}} \frac{\Delta^2_{(i)}}{i}$.
\end{lemma}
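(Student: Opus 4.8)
The plan is to reuse the two moment computations already quoted in the sketch of Theorem~\ref{thm:alg_IMPQSI} and specialize them to the choice $m=d$, then feed them into a Bernstein bound whose two regimes produce exactly the two terms of the stated minimum. Specializing the variance estimate $\Var(\tilde g(\rho,j)) = O(1/d^3 + 1/(m^2 d) + 1/(m d^2))$ at $m=d$ collapses all three terms to the same order, giving $\Var(\tilde g(\rho,j)) = O(1/d^3)$; meanwhile the expectation formula $\mathbb{E}[\tilde g(\rho,j)] = (m-1)(1+\Tr(\rho^2))/(m(d+1))$ shows that for two states whose purities differ by $\Delta$, the gap between their score expectations is $(m-1)\Delta/(m(d+1)) = \Theta(\Delta/d)$ when $m=d$. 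Since $w(\rho,k)$ is an average of $\lfloor N_k/m\rfloor = \Theta(N_k/d)$ independent copies of $\tilde g(\rho,\cdot)$, these two facts fully determine the concentration of $w$.

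First I would record that each $\tilde g(\rho,j)$ lies in a bounded range, so that $|\tilde g(\rho,j) - \mathbb{E}\,\tilde g(\rho,j)| \le M = O(1)$ almost surely; confirming this crude envelope is the only genuinely new estimate needed beyond the sketch. Then, writing $t$ for half the expectation gap $\Theta(\Delta_{(i)}/d)$ and $n = \Theta(N_k/d)$ for the number of bases used at the relevant round, Bernstein's inequality gives
\begin{equation}
    \mathbb{P}\left[w(\rho_{(i)},k) \ge w(\rho^\star,k)\right] \le \exp\left(-\Omega\left(\min\left(\frac{n t^2}{\sigma^2},\ \frac{n t}{M}\right)\right)\right),
\end{equation}
using the standard split of the Bernstein denominator $\sigma^2 + Mt$ into its two additive pieces. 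Substituting $\sigma^2 = O(1/d^3)$, $M=O(1)$, $t=\Theta(\Delta_{(i)}/d)$ and $n=\Theta(N_k/d)$ turns the variance-dominated branch into $\Omega(N_k \Delta_{(i)}^2)$ and the range-dominated branch into $\Omega(N_k \Delta_{(i)}/d^2)$. The crossover between the two branches occurs precisely at $\Delta_{(i)} \sim 1/d^2$, which is exactly why dropping the hypothesis $\Delta_i > 1/d^2$ of Theorem~\ref{thm:alg_IMPQSI} forces both branches to appear in the final bound.

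It then remains to insert the schedule $N_k = \lceil \frac{1}{\overline{\log}(K)}\frac{N-K}{K+1-k}\rceil$ and run the successive-elimination union bound. Identifying the round $k = K+1-i$ at which $\rho_{(i)}$ is scheduled to be discarded gives $N_k = \Theta(N/(\overline{\log}(K)\, i))$, so each per-pair failure probability becomes $\exp(-\Omega(\min(N\Delta_{(i)}^2/(\overline{\log}(K) i),\ N\Delta_{(i)}/(\overline{\log}(K) i d^2))))$; minimizing over $i$ replaces $\Delta_{(i)}^2/i$ by $H_2$ and $\Delta_{(i)}/i$ by $H_1$, while a union bound over the at most $\binom{K}{2}$ (round, state) pairs supplies the prefactor $K(K-1)/2$ and yields the claimed bound. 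The main obstacle is the elimination bookkeeping rather than the concentration: one must verify by induction on $k$ that, on the good event where $w(\rho^\star,\cdot)$ exceeds the score of every state still scheduled for removal, the optimal state is never the argmin and the surviving suboptimal states are eliminated in essentially increasing order of purity, so that the budget $N_k$ really is matched with the gap $\Delta_{(K+1-k)}$. Showing that this pairing is robust to the fact that eliminations need not occur in exactly the purity order is the delicate point; the boundedness constant $M=O(1)$ and the Bernstein split are otherwise routine.
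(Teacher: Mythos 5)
Your proposal is correct and follows essentially the same route as the paper's proof: specialize the expectation gap $\Theta(\Delta_{(i)}/d)$ and variance $O(1/d^3)$ at $m=d$, apply Bernstein's inequality (whose denominator split $\sigma^2 + Mt$ yields exactly the two branches $N_k\Delta_{(i)}^2$ and $N_k\Delta_{(i)}/d^2$), then run the successive-rejects union bound over rounds with the schedule $N_k = \Theta\bigl(N/(\overline{\log}(K)(K+1-k))\bigr)$. The elimination bookkeeping you flag as delicate is handled by the standard pigeonhole argument implicit in the paper's double sum $\sum_{k}\sum_{i\geq K+1-k}$, so no additional work beyond what you describe is needed.
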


Appendix \ref{subsec:PQSI_alg2} provides the proof details of Lemma \ref{lem2:PQSI_alg}.

In the field of quantum learning and testing, research on quantum channels constitutes a critical aspect. Evaluating the impact of noise on quantum channels can significantly enhance the accuracy of quantum computing and quantum communication \cite{chen2023unitarity}. Introduced a method for assessing the ``unitarity" of a quantum channel by evaluating the purity of a quantum state. Subsequently, we can utilize the algorithm IM-PQSI to identify the most ``unitary" quantum channel from a quantum channel set. Let $u_{(i)}$ denote the unitarity of the $i$-th most unitary quantum channel. We have the following corollary:

\begin{corollary}
    There exists an algorithm that solves the problem of the most ``unitary" channel identification with incoherent access whose error probability satisfies:
 \begin{equation}
        e_N \leq \exp\left(- \Omega\left(\frac{N H_u}{\log(K) 2^n }\right) \right),
    \end{equation}
    where $H_u = \min_{i \in \{2,...,K\}} \frac{u_{(1)} - u_{(i)}}{i}$.
\end{corollary}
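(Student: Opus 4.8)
The plan is to reduce the most ``unitary'' channel identification problem to the PQSI problem and then invoke Theorem~\ref{thm:alg_IMPQSI}. The bridge is the purity-based characterization of unitarity from \cite{chen2023unitarity}. For a trace-preserving channel $\mathcal{E}$ on $n$ qubits, subtracting the trace component of each pure-state output and using trace preservation gives the affine identity
\begin{equation}
    u(\mathcal{E}) = \frac{d}{d-1}\left(\mathbb{E}_{\psi\sim\Haar}\Tr\!\left[\mathcal{E}(|\psi\rangle\langle\psi|)^2\right]-\frac1d\right) = \frac{d}{d-1}\,P(\mathcal{E})-\frac{1}{d-1},
\end{equation}
where $P(\mathcal{E}):=\mathbb{E}_{\psi\sim\Haar}\Tr[\mathcal{E}(|\psi\rangle\langle\psi|)^2]$ is the average output purity. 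First I would record that $u(\mathcal{E})$ is an increasing affine function of $P(\mathcal{E})$ with slope $\frac{d}{d-1}\le 2$, so that ranking channels by unitarity is identical to ranking them by $P(\mathcal{E})$, and the gaps satisfy $u_{(1)}-u_{(i)}=\frac{d}{d-1}\bigl(P(\mathcal{E}_{(1)})-P(\mathcal{E}_{(i)})\bigr)$.

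Next I would regard $P(\mathcal{E}_i)$ as the effective purity of channel $\mathcal{E}_i$ and adapt IM-PQSI to estimate it under incoherent (single channel-use) access. The only modification to Algorithm~\ref{alg:PQSI_incoherent} concerns how a ``copy'' is produced: within measurement batch $j$ I would draw an independent Haar unitary $V_j$, prepare the input $|\psi_j\rangle = V_j|0\rangle$, apply $\mathcal{E}_i$ to $m$ freshly prepared copies of $|\psi_j\rangle$ (costing $m$ channel uses), and measure each of the $m$ outputs in the existing random basis $\{U_j^\dagger|i\rangle\langle i|U_j\}_{i=0}^{d-1}$, with $U_j$ independent of $V_j$. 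Since all $m$ outputs in the batch are identical copies of $\mathcal{E}_i(|\psi_j\rangle\langle\psi_j|)$ and $U_j$ is independent of this state, the collision estimator $\tilde g(\mathcal{E}_i,j)$ is exactly the object analyzed in Theorem~\ref{thm:alg_IMPQSI} with $\rho = \mathcal{E}_i(|\psi_j\rangle\langle\psi_j|)$; conditioned on $\psi_j$ it is unbiased for $\frac{(m-1)(1+\Tr[\mathcal{E}_i(|\psi_j\rangle\langle\psi_j|)^2])}{m(d+1)}$. Averaging $\tilde g(\mathcal{E}_i,j)$ over the batches aggregated in $w(\mathcal{E}_i,k)$ then estimates $\frac{(m-1)(1+P(\mathcal{E}_i))}{m(d+1)}$, i.e.\ the same affine rescaling of purity as in the state problem, now with $P(\mathcal{E}_i)$ in the role of $\Tr(\rho_i^2)$.

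With this identification, the elimination procedure of IM-PQSI runs verbatim on the statistics $w(\mathcal{E}_i,k)$, and Theorem~\ref{thm:alg_IMPQSI} applies with the channel gap $\Delta^{\mathrm{ch}}_{(i)}:=P(\mathcal{E}_{(1)})-P(\mathcal{E}_{(i)})=\frac{d-1}{d}\bigl(u_{(1)}-u_{(i)}\bigr)$ (inheriting the assumption $\Delta^{\mathrm{ch}}_{(i)}\ge c>1/d^2$ and the choice $m=\lceil 1/\sqrt c\rceil$). Because $\frac{d-1}{d}\ge\frac12$, we have $\min_{i}\Delta^{\mathrm{ch}}_{(i)}/i=\Theta\bigl(\min_i(u_{(1)}-u_{(i)})/i\bigr)=\Theta(H_u)$, so the constant is absorbed into the $\Omega(\cdot)$ and the theorem yields $e_N\le\exp\bigl(-\Omega(NH_u/(\log(K)2^n))\bigr)$, as claimed.

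The main obstacle is that the reduction adds a layer of randomness absent from Theorem~\ref{thm:alg_IMPQSI}: there the only sources are the Haar basis and the measurement outcomes, whereas here $\tilde g(\mathcal{E}_i,j)$ is merely \emph{conditionally} unbiased for a target $\Tr[\mathcal{E}_i(|\psi_j\rangle\langle\psi_j|)^2]$ that itself fluctuates with the random input $\psi_j$. I would handle this via the law of total variance, splitting $\Var(\tilde g(\mathcal{E}_i,j))$ into the within-batch term (already bounded by $O(1/d^3+1/(m^2d)+1/(md^2))$ in the theorem) plus the between-batch term $\bigl(\frac{m-1}{m(d+1)}\bigr)^2\Var_{\psi}\bigl(\Tr[\mathcal{E}_i(|\psi\rangle\langle\psi|)^2]\bigr)$. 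Since $\psi\mapsto\Tr[\mathcal{E}_i(|\psi\rangle\langle\psi|)^2]$ is an $O(1)$-Lipschitz function on the sphere (the channel is a trace-norm contraction and the purity map is smooth), concentration of measure (L\'evy's lemma) gives $\Var_\psi(\cdot)=O(1/d)$, so the between-batch contribution is $O(1/d^3)$, of the same order as the existing leading term; Bernstein's inequality then applies with an unchanged exponent. A minor bookkeeping step is to confirm the affine identity of the first paragraph for non-unital channels via the standard unital-plus-correction decomposition, which affects only $\Theta(1)$ constants.
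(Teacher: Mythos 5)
Your high-level route---convert unitarity into a purity statistic and run IM-PQSI on it---is exactly the reduction the paper intends (the paper offers no detailed proof of this corollary, only the remark that \cite{chen2023unitarity} assesses unitarity via state purity, followed by an appeal to Theorem \ref{thm:alg_IMPQSI}), and your handling of the extra input randomness via the law of total variance plus L\'evy-type concentration (between-batch term $O(1/d^3)$, so Bernstein's exponent is unchanged) is sound. However, there is a genuine gap in your very first step. The identity $u(\mathcal{E})=\frac{d}{d-1}\bigl(\mathbb{E}_\psi\Tr[\mathcal{E}(|\psi\rangle\langle\psi|)^2]-\frac{1}{d}\bigr)$ holds only for \emph{unital} channels. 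For a general trace-preserving channel the correct relation is
\begin{equation}
    u(\mathcal{E}) = \frac{d}{d-1}\left(\mathbb{E}_{\psi\sim\Haar}\Tr\left[\mathcal{E}(|\psi\rangle\langle\psi|)^2\right] - \Tr\left[\mathcal{E}(I_d/d)^2\right]\right),
\end{equation}
and the subtracted term $\Tr[\mathcal{E}(I_d/d)^2]$ is channel dependent, ranging over $[1/d,\,1]$. This is emphatically not a ``$\Theta(1)$ constants'' bookkeeping issue: it can invert the ranking. Take $K=2$ with $\mathcal{E}_1$ a unitary channel and $\mathcal{E}_2(\rho)=\Tr(\rho)\,|0\rangle\langle 0|$ the reset channel. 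Both have average output purity $P=1$, and since a Haar basis $U_j$ independent of a pure state induces the same outcome distribution for every pure state, your statistics $w(\mathcal{E}_1,k)$ and $w(\mathcal{E}_2,k)$ are identically distributed, so the elimination step is a coin flip. Yet $u(\mathcal{E}_1)=1$ and $u(\mathcal{E}_2)=0$, so $H_u=1/2$ and the corollary demands an error probability decaying like $\exp\left(-\Omega\left(\frac{N}{\log(K)2^n}\right)\right)$, which your protocol cannot deliver.

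The missing idea is that unitarity requires estimating \emph{two} purity-type quantities, not one: the average output purity and the purity of the average output $\Tr[\mathcal{E}(I_d/d)^2]$. The latter is also accessible with single-copy randomized measurements: feed two \emph{independent} Haar inputs $\psi_j,\phi_j$ in batch $j$, measure both outputs in the same basis $\{U_j^\dagger|i\rangle\langle i|U_j\}_{i=0}^{d-1}$, and use cross collisions, whose expectation is $\frac{1+\Tr[\mathcal{E}(\psi_j)\mathcal{E}(\phi_j)]}{d+1}$ by Lemma \ref{lem:Haar_meausrement_expectation}; averaging over the independent inputs yields $\frac{1+\Tr[\mathcal{E}(I_d/d)^2]}{d+1}$. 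Ranking channels by the \emph{difference} of the two statistics (within-input collisions minus cross-input collisions) makes the expected score an affine function of $\frac{d-1}{d}\,u(\mathcal{E}_i)$ with the same $\frac{1}{d+1}$ scale factor, after which your variance analysis and the Bernstein/union-bound machinery of Theorem \ref{thm:alg_IMPQSI} go through verbatim and give the claimed bound with gap $\frac{d-1}{d}\bigl(u_{(1)}-u_{(i)}\bigr)=\Theta\bigl(u_{(1)}-u_{(i)}\bigr)$. Under an explicit unitality promise, your proof is correct as written; without it, the reduction as you stated it fails.
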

\section{Lower bound for PQSI with incoherent measurement}
\label{sec: lower_bound}

In this section, we investigate the lower bound on the error probability for solving the problem of purest quantum state identification. This problem requires distinguishing between quantum states with different purity through sampling and measurement. Recent studies \cite{mele2024introduction} indicated that when a quantum state $\rho$ is rotated by a Haar unitary matrix and measured $N$ times, the output distribution can be calculated only if $\rho$ is either a pure state or a maximally mixed state. Consequently, the complexity analysis of testing problems often assumes that one of the quantum states is either a pure state or a maximally mixed state. This limitation presents significant challenges to our analysis.

To solve this problem, we reduce the problem of identifying the purest quantum state to the problem of identifying the purest random quantum state. This reduction allows us to retain the problem's complexity while enabling us to analyze the complexity by considering only a single problem instance. 

Next, we demonstrate that for any POVM base $\mathcal{M}$, there is a set of unitary matrices $\mathbb{U}(\mathcal{M})$ satisfying that (1) $\mathbb{P}_{U \sim \Haar} (U \in \mathbb{U}(\mathcal{M})) = \Omega(1)$; and (2) when the quantum states rotated by these unitary matrices, they are difficult to distinguish by the POVM base  $\mathcal{M}$.

At last, we only consider all the possible POVM $\mathcal{M}$ and their corresponding set of unitary matrix $\mathbb{U}(M)$. By analyzing the sampling distribution for specific POVM $\mathcal{M}$ and unitary matrix in $\mathbb{U}(M)$, we reduce the problem into a classical problem for resolution and provide a lower bound for the purest quantum state identification.

Similar to Definition 7 in \cite{gong2024sample}, we analyze the lower bound of the error probability for any algorithm solving the purest quantum state identification problem using a 2-outcome randomly incoherent POVM to evaluate the task's difficulty.

\begin{definition}[Randomly fixed incoherent two-outcome POVM] We say an algorithm $\mathcal{A}$ with a randomly fixed incoherent two-outcome POVM, if it proceeds as the following:  The algorithm $\mathcal{A}$ samples a POVM $\mathcal{M} = \{M_0, M_1 = I_d -M_0\}$ from a well-designed distribution of POVMs $\mathcal{D}_{\mathcal{M}}$ and performs the two-outcome single-copy POVM $\mathcal{M}$ on the copies of the quantum states.
\end{definition}

\subsection{Problem reduction}

In this subsection, we aim to demonstrate that if there exists a set of random quantum states $ T(x) = \{\tau_1(x), \ldots, \tau_K(x)\} $ which is difficult to identify the purest one in $T$, there also exists a corresponding set of quantum states $ S = \{\rho_1, \ldots, \rho_K\} $ where is difficult to identify the purest one in $S$. In this way, we only need to construct $ K $  random quantum states, which are hard to distinguish. Then, we can demonstrate the difficulty of the purest quantum state identification problem.


To enhance our discussion, we define the problem of the purest random quantum state identification(PRQSI) with incoherent measurement as follows:

\begin{problem}[Purest random quantum state identification(PRQSI) with incoherent measurement] \label{problem:PRQSI}  Consider a set of $K$ unknown random quantum states, denote as $ T(U) = \{ \tau_1(U), \ldots, \tau_K(U) \} $, where $U$ samples from a fixed distribution $\mathcal{D}$. For each $ k \in [K] $ and $U \sim \mathcal{D}$, $\Tr ((\tau_k(U))^2) = z_k$. In each round $t \in\{1,...,N\}$, the learner selects an index $k_t \in [K]$ and a POVM $\mathcal{M}_t$. The learner obtains a copy of $\tau_{k_t}(U)$ and uses $\mathcal{M}_t$ to measure it. Upon completing $ N $ measurements, the learner selects an index $ k' \in [K] $ as the output. The objective of the learner is to maximize $ \mathbb{P}(k' \in \arg\max_{k \in [K]} z_k) $.
\end{problem} 

As shown in the following lemma, we can reduce the proof of the error probability lower bound for the PQSI problem into the proof of the lower bound for a specific instance of the PRQSI problem.

\begin{lemma}
    \label{lem:problem transformation}
    If there exists a set of random quantum states $T(U)$ in Problem \ref{problem:PRQSI} such that any algorithm $\mathcal{A}_T$ addressing Problem \ref{problem:PRQSI} cannot identify the purest random quantum state with an error probability lower than $ e_N $, then for any algorithm $ \mathcal{A} $ addressing Problem \ref{problem:PQSI_restatement}, there exists a specific set of quantum states $S = \{\rho_1, \ldots, \rho_K\}$ such that the error probability of algorithm $ \mathcal{A} $ is not lower than $ e_N $.
\end{lemma}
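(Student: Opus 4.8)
The plan is to prove this by an averaging (probabilistic-method) argument that converts the hypothesized average-case PRQSI lower bound into a worst-case PQSI lower bound. The crucial structural observation is that for every fixed unitary $U$ in the support of $\calD$, the set $T(U) = \{\tau_1(U), \ldots, \tau_K(U)\}$ is itself an admissible deterministic instance of the PQSI problem (Problem \ref{problem:PQSI_restatement}), and that by the definition of the PRQSI family $\Tr((\tau_k(U))^2) = z_k$ holds for \emph{all} such $U$. Hence the purest index $k^\star = \arg\max_{k\in[K]} z_k$ is the same for every $U$, so the failure event ``output $\neq k^\star$'' has an unambiguous meaning across the entire support of $\calD$.

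Given an arbitrary algorithm $\calA$ for Problem \ref{problem:PQSI_restatement}, I would construct an algorithm $\calA_T$ for the PRQSI problem (Problem \ref{problem:PRQSI}) as follows: $\calA_T$ first draws $U \sim \calD$ and then simulates $\calA$ on the instance $T(U)$, i.e., whenever $\calA$ requests a copy of the $k$-th state and applies a POVM $\mathcal{M}$, the algorithm $\calA_T$ selects index $k$, obtains a copy of $\tau_k(U)$, and measures it with $\mathcal{M}$; $\calA_T$ finally returns whatever index $\calA$ outputs. Because $\calA$ interacts with its input only through the per-round interface ``choose a state, choose a POVM, measure'', which is exactly the interface offered by the PRQSI protocol, $\calA_T$ is a legitimate (possibly adaptive and randomized) PRQSI algorithm, and on every realization of $U$ and of the internal randomness its output coincides with that of $\calA$ run on the fixed instance $T(U)$.

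Let $e_N(T(U))$ denote the failure probability (over measurement outcomes and $\calA$'s internal coins) of $\calA$ on the fixed instance $T(U)$. By construction, the PRQSI error of $\calA_T$ equals $\mathbb{E}_{U \sim \calD}[\, e_N(T(U)) \,]$. The hypothesis of the lemma — that no PRQSI algorithm for this family can do better than $e_N$ — gives $\mathbb{E}_{U \sim \calD}[\, e_N(T(U)) \,] \geq e_N$. Since a maximum dominates an average, there exists a particular $U^\star$ in the support of $\calD$ with $e_N(T(U^\star)) \geq e_N$. Taking $S = T(U^\star) = \{\tau_1(U^\star),\ldots,\tau_K(U^\star)\}$ — a fixed, deterministic set of $K$ quantum states — yields a PQSI instance on which $\calA$ errs with probability at least $e_N$, which is exactly the claim.

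The step I expect to require the most care is not any calculation but the verification that the map $\calA \mapsto \calA_T$ is well defined and error-preserving for \emph{all} admissible PQSI strategies: one must check that an adaptive choice of index and POVM that may depend on the full history of previous outcomes translates verbatim into an admissible PRQSI strategy, and that no information about $U$ leaks to $\calA$ beyond what it learns through its own measurements, so that $\calA_T$ genuinely does not ``know'' $U$. Once this correspondence is pinned down, the identity of the failure event across the support of $\calD$ (guaranteed by the $U$-independence of the purities $z_k$) together with the one-line averaging argument completes the proof.
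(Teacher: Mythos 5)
Your proposal is correct and is essentially the same argument as the paper's: both reduce PQSI to PRQSI by running the given algorithm $\mathcal{A}$ verbatim on the random instance $T(U)$ (the interfaces match round for round), and both rest on the identity that the resulting PRQSI error equals $\mathbb{E}_{U\sim\mathcal{D}}$ of the per-instance PQSI error. The only difference is presentational — the paper argues by contrapositive (worst-case error $< e_N$ on all sets would give average error $< e_N$), while you argue directly that the average is at least $e_N$ and so some fixed $U^\star$ yields an instance $S = T(U^\star)$ with error at least $e_N$ — which is logically the same averaging step.
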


\begin{proof-sketch}
    Suppose that there exists an algorithm $\mathcal{A}$ satisfying whose error probability for solving Problem \ref{problem:PQSI_restatement} is less than $e_N$. We can prove that the algorithm $\mathcal{A}$ can solve the Problem \ref{problem:PRQSI} with the error probability less than $e_N$. Furthermore, we can establish the proof by considering the contrapositive of this statement. Detailed explanations of the proof are included in Appendix \ref{subsection: proof of lemma problem transformation}.
\end{proof-sketch}

According to Lemma \ref{lem:problem transformation}, we will establish the lower bound of the error probability for the problem PQSI by demonstrating the error probability lower bound for the following problem:

\begin{problem}\label{problem:PQDSI_special}
    Consider the Problem \ref{problem:PRQSI}. For $k \in [K]$, let $\alpha_k  = \sqrt{\frac{dz_k -1}{d-1}}$,
    \begin{equation}
        \tau_k(U) = \alpha_k U| 0 \rangle \langle 0|U^\dagger +  \frac{1-\alpha_k}{d} I_d, 
    \end{equation}
    where $U \sim \Haar$.
\end{problem}

Then, In the Problem \ref{problem:PQDSI_special}, for $k \in \{1,...,K\}$ and $U \sim \Haar$, the purity of the quantum state $\tau_k(U)$ satisfies:
\begin{equation*}
    \Tr\left((\tau_k(U))^2\right) = \left(\frac{1 + (d-1)\alpha_k}{d}\right)^2 + (d-1) \left(\frac{1-\alpha_k}{d}\right)^2 = z_k.
\end{equation*}

\subsection{Random quantum state purity certification}

To analyze Problem \ref{problem:PQDSI_special}, we first study the properties of the measurement results obtained from conducting $ N' $ measurements on the sampled quantum states from the quantum state distribution $\mathcal{D}=\{\rho| \rho = \alpha U| 0 \rangle \langle 0|U^\dagger +  \frac{1-\alpha}{d} I_d \}$, using a specific POVM $\mathcal{M} =\{M_0, M_1\}$,
where $\ U \sim \Haar$ and $\alpha$ is a constant satisfying $0\leq \alpha \leq 1$.

\begin{lemma}\label{lem:larger_expectation}
    Let $a \in [0,1]$. Using a specific POVM $\mathcal{M} =\{M_0, M_1\}$ to measure the random quantum state in $\mathcal{D}=\{\rho| \rho = \alpha U| 0 \rangle \langle 0|U^\dagger +  \frac{1-\alpha}{d} I_d \}$. Let $M = \arg\min_{M' \in\{M_0,M_1\}} \Tr(M')$. We have
    \begin{equation*}
        \mathbb{P}_{U \sim \Haar} \left[ \left|p_{\mathcal{M}}(M| U) - \frac{\Tr(M)}{d}\right| < \frac{2a\sqrt{\Tr(M)}}{d}\right] \geq \frac{3}{4},
    \end{equation*}
    and there is a function $c(\mathcal{M},U)$ satisfying
        $p_{\mathcal{M}}(M| U)- \frac{\Tr(M)}{d} =  c(\mathcal{M},U)\alpha$.
\end{lemma}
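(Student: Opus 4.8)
The plan is to reduce the statement to a concentration bound for the single random variable $X(U) := \langle 0| U^\dagger M U |0\rangle = \langle\psi|M|\psi\rangle$, where $|\psi\rangle = U|0\rangle$ is a Haar-random pure state, and then apply Chebyshev's inequality after computing its first two moments via the Haar second-moment formula. First I would expand the outcome probability by linearity of the trace. Since $\rho = \alpha\, U|0\rangle\langle 0|U^\dagger + \frac{1-\alpha}{d} I_d$,
\begin{equation*}
    p_{\mathcal{M}}(M\mid U) = \Tr(M\rho) = \alpha\,\langle 0|U^\dagger M U|0\rangle + \frac{1-\alpha}{d}\Tr(M).
\end{equation*}
Subtracting $\Tr(M)/d$ gives
\begin{equation*}
    p_{\mathcal{M}}(M\mid U) - \frac{\Tr(M)}{d} = \alpha\left(\langle 0|U^\dagger M U|0\rangle - \frac{\Tr(M)}{d}\right),
\end{equation*}
which immediately yields the claimed function $c(\mathcal{M},U) = \langle 0|U^\dagger M U|0\rangle - \tfrac{\Tr(M)}{d}$ and shows that the deviation scales linearly in $\alpha$. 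Identifying the parameter $a$ with the coherence parameter $\alpha$ and assuming $\alpha > 0$, factoring out $\alpha$ shows that the event in the probability bound is exactly $\{|X(U) - \Tr(M)/d| < 2\sqrt{\Tr(M)}/d\}$, so it remains to prove a concentration statement for $X$ alone.

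Next I would compute the mean and variance of $X$ over the Haar measure. Since $\mathbb{E}_{U\sim\Haar}[|\psi\rangle\langle\psi|] = I_d/d$, the mean is $\mathbb{E}[X] = \Tr(M)/d$, which matches the centering. For the second moment I would invoke the two-design identity $\mathbb{E}[|\psi\rangle\langle\psi|^{\otimes 2}] = \frac{I + \mathrm{SWAP}}{d(d+1)}$ together with $\Tr((M\otimes M)\,\mathrm{SWAP}) = \Tr(M^2)$, giving $\mathbb{E}[X^2] = \frac{(\Tr M)^2 + \Tr(M^2)}{d(d+1)}$ and hence
\begin{equation*}
    \Var(X) = \frac{1}{d(d+1)}\left(\Tr(M^2) - \frac{(\Tr M)^2}{d}\right).
\end{equation*}
Because $M$ is a POVM element, $0 \preceq M \preceq I_d$, so every eigenvalue lies in $[0,1]$ and $\Tr(M^2) \le \Tr(M)$; discarding the nonnegative subtracted term yields $\Var(X) \le \frac{\Tr(M)}{d(d+1)} \le \frac{\Tr(M)}{d^2}$.

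Finally I would apply Chebyshev's inequality with threshold $t = 2\sqrt{\Tr(M)}/d$, so that $t^2 = 4\Tr(M)/d^2$ and
\begin{equation*}
    \mathbb{P}_{U\sim\Haar}\left[\left|X - \tfrac{\Tr(M)}{d}\right| \ge \tfrac{2\sqrt{\Tr(M)}}{d}\right] \le \frac{\Var(X)}{t^2} \le \frac{\Tr(M)/d^2}{4\Tr(M)/d^2} = \frac14,
\end{equation*}
and taking complements gives the desired lower bound of $3/4$. I expect the only genuinely delicate step to be the second-moment computation: one must use the correct Haar two-design formula and the swap-trace identity, and then observe that it is the POVM constraint $M \preceq I_d$ — rather than any special feature of the min-trace choice $M = \arg\min_{M'} \Tr(M')$ — that makes the bound $\Tr(M^2) \le \Tr(M)$ go through (the min-trace selection is relevant only to the downstream distinguishability argument). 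The degenerate case $\alpha = 0$ makes both sides vanish and can be excluded or handled separately; everything else is routine bookkeeping.
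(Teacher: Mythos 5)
Your proof is correct and takes essentially the same route as the paper's: both show that $p_{\mathcal{M}}(M\mid U)$ has Haar mean $\Tr(M)/d$ and variance at most $O\!\left(\alpha^2\Tr(M)/d^2\right)$ using the degree-2 Haar moment formula (the bound $\Tr(M^2)\le\Tr(M)$ playing the role of the paper's $b_i\in[0,1]$), and then conclude by Chebyshev with threshold $2\alpha\sqrt{\Tr(M)}/d$. The only difference is organizational: you factor out $\alpha$ first and apply the operator two-design identity to $X=\langle\psi|M|\psi\rangle$, whereas the paper carries $\alpha$ through an eigenbasis expansion of $M$; the two computations are identical in substance, since the paper's quoted Haar-expectation lemma with $A=B=M$ yields exactly your second moment.
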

\begin{proof-sketch}
    By utilizing the properties of the Haar unitary matrix, we can calculate the variance of $p_\mathcal{M}(M|U)$ and prove the probabilistic bounds in the lemma using Chebyshev's inequality, thus completing the proof. The proof details are provided in Appendix \ref{subsection: proof of lemma larger expectation}.
\end{proof-sketch}

According to Lemma \ref{lem:larger_expectation}, for a specific POVM $\mathcal{M}$ and unitary matrix $U$, let $\mathbb{U}_{\mathcal{M}}$ denote the set of unitary matrix satisfying that 
\begin{equation*}
    \mathbb{U}_{\mathcal{M}} = \left\{U : \left|p_{\mathcal{M}}(M| U) - \frac{\Tr(M)}{d}\right| < \frac{2\alpha\sqrt{\Tr(M)}}{d}\right\}.
\end{equation*}
We have
    $\mathbb{P}_{U \sim \Haar}(U \in \mathbb{U}_\mathcal{M}) \geq \frac{3}{4}$.

The following analysis will focus on the unitary matrices in the set $\mathbb{U}_\mathcal{M}$.


\subsection{Error probability lower bound}

In this subsection, we will prove the lower bound of error probability for using algorithms to solve Problem \ref{problem:PQSI_restatement} and Problem \ref{problem:PQDSI_special}. 

Let  $M = \arg\min_{M' \in \{M_0,M_1\}}\Tr(M')$, then we have $\Tr(M) \in [0,d/2]$. In the following theorem, let $\Tr(M) > 16$ in order to make $\Tr(M) - 2\sqrt{\Tr(M)} \geq \frac{1}{2}
\Tr(M) $. 

\begin{theorem}
    \label{thm:PQSI lower bound}
     Let $M = \arg\min_{M' \in \{M_0,M_1\}}\Tr(M')$ and $\Tr(M) > 16$. For any algorithm $\mathcal{A}$ to solve the purest quantum state identification using fixed 2-outcome randomly incoherent POVM, there exists a set of quantum states which makes the error probability of $\mathcal{A}$ satisfies
     \begin{equation}
         e_N \geq \exp\left( - O\left( \frac{NH_1}{d}\right)\right),
     \end{equation}
     where $H_1 = \min_{i \in \{2,...,K\}} \frac{\Delta_{(i)}}{i}$.
\end{theorem}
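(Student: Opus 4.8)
The plan is to combine the reduction in Lemma~\ref{lem:problem transformation} with a change-of-measure argument (Bretagnolle--Huber / Le Cam) tailored to best-arm identification, driven by the structural facts in Lemma~\ref{lem:larger_expectation}. By Lemma~\ref{lem:problem transformation} it suffices to exhibit a hard instance of the random problem, so I would take the family from Problem~\ref{problem:PQDSI_special}, where $\tau_k(U) = \alpha_k U|0\rangle\langle 0|U^\dagger + \frac{1-\alpha_k}{d}I_d$ with $\alpha_k = \sqrt{(dz_k-1)/(d-1)}$ and a single common $U\sim\Haar$. Fix the sampled two-outcome POVM $\mathcal{M}=\{M_0,M_1\}$ and let $M=\arg\min_{M'}\Tr(M')$. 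The key first observation is that measuring $\tau_k(U)$ yields a Bernoulli outcome with parameter
\[
p_k(U) = \Tr(M\tau_k(U)) = \frac{\Tr(M)}{d} + c(\mathcal{M},U)\,\alpha_k,
\qquad c(\mathcal{M},U)=\langle 0|U^\dagger M U|0\rangle-\tfrac{\Tr(M)}{d},
\]
where $c(\mathcal{M},U)$ is \emph{the same} for every arm $k$ and only $\alpha_k$ varies. Since $|c(\mathcal{M},U)\alpha|<2\alpha\sqrt{\Tr(M)}/d$ on $\mathbb{U}_\mathcal{M}$, the set $\mathbb{U}_\mathcal{M}$ is really $\{U:|c(\mathcal{M},U)|<2\sqrt{\Tr(M)}/d\}$, independent of the particular purities, so the same event (of probability at least $3/4$ by Lemma~\ref{lem:larger_expectation}) serves both the true and the confusing instances below.

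The second step is to control per-copy distinguishability. I would bound the Bernoulli divergence between parameters $p_{(1)}$ and $p_{(j)}$ using $|p_{(1)}-p_{(j)}|=|c|\,|\alpha_{(1)}-\alpha_{(j)}|$, the elementary inequality $(\alpha_{(1)}-\alpha_{(j)})^2\le \alpha_{(1)}^2-\alpha_{(j)}^2=\frac{d}{d-1}\Delta_{(j)}$, and the variance floor $p(1-p)=\Omega(\Tr(M)/d)$, which is exactly where $\Tr(M)>16$ (so that $\Tr(M)-2\sqrt{\Tr(M)}\ge\tfrac12\Tr(M)$) enters. The numerator is $O(\Tr(M)\Delta_{(j)}/d^2)$ and the denominator is $\Omega(\Tr(M)/d)$, so the factors of $\Tr(M)$ cancel and I obtain a per-copy bound $\mathrm{KL}(p_{(j)}\,\|\,p_{(1)})=O(\Delta_{(j)}/d)$ uniformly over admissible POVMs.

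The third step is the change of measure, organized to produce the factor $1/i$. For each $i\in\{2,\dots,K\}$ I introduce the confusing instance $\mathcal{I}_{(j)}$ that raises the purity of one non-optimal arm $(j)$, $2\le j\le i$, to just above $z_{(1)}$, so that $(j)$ becomes the unique purest state and the two instances differ only on arm $(j)$. By the transportation lemma for adaptive strategies (Kaufmann--Capp\'e--Garivier), $\mathrm{KL}(\mathbb{P}_{\mathcal{I}}\|\mathbb{P}_{\mathcal{I}_{(j)}})=\mathbb{E}_{\mathcal{I}}[N_{(j)}]\cdot\mathrm{KL}(p_{(j)}\|p_{(1)})$. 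Since $\sum_{j=2}^{i}\mathbb{E}_{\mathcal{I}}[N_{(j)}]\le N$, averaging gives some $j\le i$ with $\mathbb{E}_{\mathcal{I}}[N_{(j)}]\le N/(i-1)$; combined with $\Delta_{(j)}\le\Delta_{(i)}$ (gaps are sorted) this yields $\mathrm{KL}(\mathbb{P}_{\mathcal{I}}\|\mathbb{P}_{\mathcal{I}_{(j)}})=O\!\left(\frac{N\Delta_{(i)}}{(i-1)d}\right)$. Applying Bretagnolle--Huber to the event $\{\text{output}=(1)\}$ gives $e_N\ge\tfrac14\exp(-\mathrm{KL})\ge\exp\!\left(-O\!\left(\frac{N\Delta_{(i)}}{i\,d}\right)\right)$ for every $i$, and optimizing over $i$ replaces $\Delta_{(i)}/i$ by $H_1=\min_i\Delta_{(i)}/i$, yielding the claim. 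The conditioning on $U\in\mathbb{U}_\mathcal{M}$ and the outer randomness $\mathcal{M}\sim\mathcal{D}_\mathcal{M}$ cost only constant factors absorbed into $O(\cdot)$.

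I expect the main obstacle to be the per-copy KL bound: one must verify that the $\Tr(M)$ dependence genuinely cancels so the bound is $O(\Delta_{(j)}/d)$ uniformly over all admissible two-outcome POVMs (this is where both $|c|<2\sqrt{\Tr(M)}/d$ from Lemma~\ref{lem:larger_expectation} and the variance floor from $\Tr(M)>16$ are used in tandem), while simultaneously handling adaptivity through the transportation lemma even though the underlying states are themselves random in $U$. The averaging step that extracts the $1/i$ factor is combinatorially simple but must be coordinated with the sorted-gap ordering so that the under-sampled arm always has gap at most $\Delta_{(i)}$.
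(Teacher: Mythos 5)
Your overall architecture matches the paper's: reduce via Lemma~\ref{lem:problem transformation} to the random family of Problem~\ref{problem:PQDSI_special}, use Lemma~\ref{lem:larger_expectation} to see that conditional on $(\mathcal{M},U)$ each arm yields a Bernoulli outcome with parameter $\frac{\Tr(M)}{d}+c(\mathcal{M},U)\alpha_k$ (your observation that $\mathbb{U}_{\mathcal{M}}$ depends only on $|c(\mathcal{M},U)|$ and not on the purities is correct and is implicitly what the paper uses), bound the effective gap via $(\alpha_{(1)}-\alpha_{(j)})^2\le\alpha_{(1)}^2-\alpha_{(j)}^2=\frac{d}{d-1}\Delta_{(j)}$, use $\Tr(M)>16$ for the variance floor so the $\Tr(M)$ dependence cancels, and integrate over $U\in\mathbb{U}_{\mathcal{M}}$ and $\mathcal{M}\sim\mathcal{D}_{\mathcal{M}}$. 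The one real difference is the bandit step: the paper invokes the classical fixed-budget best-arm-identification lower bound (Theorem~\ref{theorem:classical_BAI}, Audibert et al.) as a black box, with Bernoulli parameters in $[a,1-a]$, $a=\Omega(\Tr(M)/d)$, which directly delivers the $\exp\left(-O\left(NH_1/d\right)\right)$ rate; you instead attempt to re-derive that classical bound from scratch via the transportation lemma and Bretagnolle--Huber.

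That re-derivation has a genuine gap. Your confusing instance $\mathcal{I}_{(j)}$ raises the purity of arm $(j)$ to ``just above'' $z_{(1)}$. Bretagnolle--Huber only guarantees that \emph{one} of the two instances $\mathcal{I}$, $\mathcal{I}_{(j)}$ has error at least $\frac14 e^{-\mathrm{KL}}$, and it may always be $\mathcal{I}_{(j)}$; your line ``$e_N\ge\frac14\exp(-\mathrm{KL})$'' silently attributes the bound to the original instance, which Bretagnolle--Huber does not justify. But the theorem's bound must hold with the exhibited instance's \emph{own} complexity parameter: for $\mathcal{I}_{(j)}$ the top-two gap is arbitrarily small, so its own $H_1'$ is essentially $0$ and the theorem would demand error $\Omega(1)$ for it --- far stronger than the $\frac14\exp\left(-O\left(NH_1/d\right)\right)$ you derived, where $H_1$ is the \emph{original} instance's parameter. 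Moreover, when $z_{(1)}=1$ (or close to $1$) no valid quantum state is strictly purer, so the confusing instance need not even exist. The standard repair --- permutation instances swapping $z_{(1)}$ and $z_{(j)}$, which preserve the multiset of purities and hence $H_1$ (this is exactly the form of Theorem~\ref{theorem:classical_BAI}) --- changes \emph{two} arms, so the trajectory KL acquires the term $\mathbb{E}[N_{(1)}]\,\mathrm{KL}(p_{(1)}\Vert p_{(j)})$, which your averaging argument cannot control because a good algorithm samples the best arm heavily; handling that is precisely the nontrivial content of the classical proof. A mirrored construction ($z'_{(j)}=z_{(1)}+\Delta_{(j)}$, which keeps $H_1'=\Theta(H_1)$) would patch your argument whenever $z_{(1)}+\Delta_{(j)}\le 1$, but as written the proposal does not close this loop, whereas the paper sidesteps it entirely by citing the classical result.
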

\begin{proof-sketch}
    Let $p^{\mathcal{A}}_{e}(\mathcal{M},U)$ denote the error probability for algorithm $\mathcal{A}$ to solve the problem \ref{problem:PQDSI_special}, with specific unitary matrix $U$ and POVM $\mathcal{M}$. The error probability of $\mathcal{A}$ to solve the problem \ref{problem:PQDSI_special} satisfying
    \begin{equation*}
        \begin{aligned}
            e^\mathcal{A}_N = & \int_{\mathcal{M} \in \mathcal{D}_{\mathcal{M}}} \int_{U \sim \Haar} p^\mathcal{A}_e(\mathcal{M},U) d\mathcal{M} dU \\
            \geq & \int_{\mathcal{M} \in \mathcal{D}_{\mathcal{M}}} \int_{U \sim \Haar} p^\mathcal{A}_e(\mathcal{M},U)\indicator\{U \in \mathbb{U}_\mathcal{M}\} d\mathcal{M} dU.
        \end{aligned}
    \end{equation*}
   
    When the $i$-th quantum state is measured using $\mathcal{M}$, the measurement result follows a Bernoulli distribution with parameter $\Tr(M\rho_U)$. From Lemma \ref{lem:larger_expectation} and the definition of $c(\mathcal{M}, U)$, we have 
    \begin{equation*}
        \Tr(M\rho_i|U) = c(\mathcal{M},U)\alpha_i + \frac{\Tr(M)}{d}.    
    \end{equation*}
    
    If $c(\mathcal{M}, U) >0$, we need to find the Bernoulli distribution with the largest parameter where the parameter of the $i$-th Bernoulli distribution is $\Tr(M\rho_i|U)$, and we have 
    \begin{equation}
        \Tr(M\rho_i|U) - \Tr(M\rho_j|U) = c(\mathcal{M},U) \left[\sqrt {\frac{dz_i -1}{d-1}} - \sqrt{\frac{dz_j -1}{d-1}}\right].
    \end{equation}
    
    Since $\Tr(M) > 16$ and according to the definition of $\mathbb{U}(\mathcal{M})$ and $M$, for $U \in \mathbb{U}$ we have
    \begin{equation}
         \Tr(M\rho_i|U) \in \left[\frac{\Tr(M)}{2d}, 1- \frac{\Tr(M)}{2d}\right],
    \end{equation}
    and 
    \begin{equation}
        1- \frac{\Tr(M)}{2d} \geq \frac{1}{2}.
    \end{equation}
    Then, according to the problem of best arm identification problem with Bernoulli distribution, we can demonstrate that $p_e^\mathcal{A}(U,\mathcal{M}) \geq \exp\left(- O\left( \frac{NH_1}{d}\right)\right)$.
    Then we have
    \begin{equation*}
        \begin{aligned}
            e^\mathcal{A}_N \geq & \int_{\mathcal{M} \in \mathcal{D}_{\mathcal{M}}} \int_{U \sim \Haar} p^\mathcal{A}_e(\mathcal{M},U)\indicator\{U \in \mathbb{U}_\mathcal{M}\} d\mathcal{M} dU \\
            \geq & \exp\left(-O\left( \frac{NH_1}{d}\right)\right)\int_{\mathcal{M} \in \mathcal{D}_{\mathcal{M}}}  \frac{3}{4} d\mathcal{M} \\
            \geq & \exp\left(- O\left( \frac{NH_1}{d}\right)\right).
        \end{aligned}
    \end{equation*}
    
    For any algorithm $\mathcal{A}_{\mathcal{D}}$ addressing Problem \ref{problem:PQDSI_special} cannot identify the purest random quantum state with an error probability lower than $\exp\left(-O\left( \frac{NH_1}{d}\right)\right)$. According to Lemma \ref{lem:problem transformation}, we can complete the proof. The proof details are provided in Appendix \ref{subsec: thm lower bound}.
\end{proof-sketch}
\section{PQSI with coherent measurement}
\label{sec: PQSI_co}

In this section, we investigate the problem of purest quantum state identification with coherent measurement and propose an algorithm to solve the purest quantum state identification with coherent measurement based on the SWAP test.

The SWAP test is a quantum algorithm designed to assess the similarity between two quantum states. It offers a method for estimating these states' fidelity to quantify their closeness. We use the SWAP test in Figure \ref{fig:swap_test} to estimate the purity of the quantum state $\rho$ in the unknown quantum state set $S$. The measurement results in Figure \ref{fig:swap_test} have a probability of $\frac{1+ \Tr(\rho^2)}{2}$ for 0 and a probability of $\frac{1- \Tr(\rho^2)}{2}$ for 1. The details of the algorithm are shown in Algorithm \ref{alg:PQSI_coherent}.

\begin{figure}[htb]
    \vskip 0in
    \centering
    \[
    \Qcircuit @C=.7em @R=1.8em {
    \lstick{\rho}      & \qw & \qw        & \multigate{1}{\mathrm{SWAP}} & \qw       &  \qw    \\
    \lstick{\rho}      & \qw & \qw        & \ghost{\mathrm{SWAP}}        & \qw       &  \qw     \\
    \lstick{|0\rangle} & \qw & \gate{H}   & \ctrl{-1}            & \gate{H}  & \meter{} \\
    }
    \]
    \caption{The SWAP test circuit.}
    \label{fig:swap_test}
\end{figure}
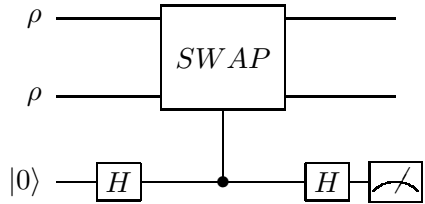

\begin{algorithm}[htb]
    \caption{Coherent measurement based algorithm for solving PQSI problem (CM-PQSI)}
    \label{alg:PQSI_coherent}
    \begin{algorithmic}
    \STATE {\bfseries Input:} Copy access to $S = \{\rho_1,...,\rho_K\}$, sample number $N$.
    \STATE {\bfseries Initialization:} Set $S_0 = \{\rho_1,...,\rho_K\}$, $\overline{\log}(K) = \frac{1}{2} + \sum_{i=2}^K \frac{1}{i}$, $N_0=0$ and $N_k = \left\lceil \frac{1}{2\overline{\log}(K)} \frac{N-K}{K+1-k}\right\rceil$, for $k \in \{1,...,K-1\}$. 
    \FOR{$i=1,..., K-1$}
        \STATE For all $\sigma \in S_{i-1}$, use SWAP test as Figure \ref{fig:swap_test} for $N_k - N_{k-1}$ rounds, and set the outputs as $x_{(\sigma, N_{k-1}+1)},..., x_{(\sigma, N_{k})}$.
        \STATE For all $\sigma \in S_{i-1}$, let $w(\sigma, k) = \frac{1}{N_k}\sum_{i=1}^{N_k} x_{(\sigma, i)}$.
        \STATE Let $S_k = S_{k-1} \setminus \arg \min_{\rho \in S_{k-1}} w(\rho,k). $
    \ENDFOR

    \STATE Output the quantum state $\rho'$ in $S_{k-1}$.

\end{algorithmic}
\end{algorithm}

\begin{theorem} \label{thm:alg_CMPQSI_coherent}
    The probability of error of Algorithm \ref{alg:PQSI_coherent} satisfies
    \begin{equation}
         e_N \leq \frac{K(K-1)}{2}\exp\left(- \frac{N H_2}{8\overline{\log}(K) } \right),
    \end{equation}
    where $H_2 = \min_{i \in \{2,...,K\}} \frac{\Delta^2_{(i)}}{i}$.
\end{theorem}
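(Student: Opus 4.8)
The plan is to recognize Algorithm \ref{alg:PQSI_coherent} as a quantum instantiation of the classical Successive Rejects scheme and to reduce its analysis to a fixed-budget best-arm identification problem over Bernoulli arms. The first step is to record the statistic produced by the SWAP test: measuring two copies of $\sigma$ as in Figure \ref{fig:swap_test} returns a single bit $x$ with $\mathbb{P}(x=1) = \frac{1-\Tr(\sigma^2)}{2}$, so each outcome is an independent Bernoulli variable with mean $p_\sigma := \frac{1-\Tr(\sigma^2)}{2}$, which is monotonically decreasing in the purity $\Tr(\sigma^2)$. Consequently $w(\sigma,k)$ is the empirical mean of $N_k$ such i.i.d.\ bits and an unbiased estimator of $p_\sigma$, and identifying the purest state is exactly identifying the Bernoulli arm of smallest mean. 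The gap between the Bernoulli means of $\rho_{(1)}$ and $\rho_{(i)}$ equals $\frac{\Delta_{(i)}}{2}$, i.e.\ half the purity gap; this factor of two is where the $\tfrac18$ in the target exponent will enter.

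Second, I would verify that the budget is legitimate. Each SWAP test consumes two copies of the state, so the total number of copies used is $2\sum_{k=1}^{K-1}(K+1-k)(N_k - N_{k-1})$. Using $N_k = \big\lceil \tfrac{1}{2\overline{\log}(K)}\tfrac{N-K}{K+1-k}\big\rceil$ together with $\overline{\log}(K) = \tfrac12 + \sum_{i=2}^K \tfrac1i$, the standard rejected-arm bookkeeping (the arm eliminated after phase $k$ has been tested $N_k$ times, plus one surviving arm) telescopes this sum to at most $N$; here the factor $\tfrac12$ inside $N_k$ precisely cancels the two copies consumed per test.

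Third comes the concentration step and the Successive Rejects error argument. For each phase $k$ and each surviving state, Hoeffding's inequality applied to the bounded bits $x\in\{0,1\}$ controls $|w(\sigma,k)-p_\sigma|$; more efficiently, comparing two survivors directly, $w(\rho_{(1)},k)-w(\rho_j,k)$ is an average of $N_k$ differences in $[-1,1]$, giving $\mathbb{P}(w(\rho_{(1)},k)\ge w(\rho_j,k)) \le \exp\!\big(-\tfrac{N_k}{8}\Delta_{(j)}^2\big)$ when the purity gap is $\Delta_{(j)}$. The algorithm errs only if $\rho_{(1)}$ is discarded at some phase. The key combinatorial step is the pigeonhole: at phase $k$ only $k-1$ states have been eliminated, so among the $k$ least pure states at least one, say $\rho_j$, still survives, and its purity gap to $\rho_{(1)}$ is at least $\Delta_{(K+1-k)}$. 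Hence $\rho_{(1)}$ being discarded at phase $k$ forces a confusion event against such a $\rho_j$, which by the bound above costs $\exp\!\big(-\tfrac{N_k}{8}\Delta_{(K+1-k)}^2\big)$. Since $N_k\Delta_{(K+1-k)}^2 \ge \tfrac{N-K}{2\overline{\log}(K)}\,\min_{i}\tfrac{\Delta_{(i)}^2}{i} = \tfrac{(N-K)H_2}{2\overline{\log}(K)}$, every phase contributes the \emph{same} exponent in $H_2$. Union-bounding over the at most $K-k$ competing survivors per phase and summing $\sum_{k=1}^{K-1}(K-k)=\tfrac{K(K-1)}{2}$ produces the prefactor $\tfrac{K(K-1)}{2}$ and an exponent of the claimed form $-\tfrac{NH_2}{8\overline{\log}(K)}$.

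The step that genuinely requires care — and the main obstacle — is the combinatorial pigeonhole tying the event ``$\rho_{(1)}$ eliminated at phase $k$'' to a guaranteed confusable survivor of gap at least $\Delta_{(K+1-k)}$, together with checking that the phase-wise allocation $N_k$ makes the product $N_k\Delta_{(K+1-k)}^2$ uniform across all $k$ and equal to the worst-case hardness $1/H_2$. The quantum content, by contrast, is confined to the elementary SWAP-test identity and is immediate. The only remaining bookkeeping is to track the ceilings in $N_k$ (absorbed by replacing $N-K$ with $N$) and the Bernoulli half-gap factor through the constants so as to land on exactly $\tfrac18$ in the exponent.
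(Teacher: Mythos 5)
Your proposal matches the paper's proof essentially step for step: the SWAP-test outcomes are treated as Bernoulli variables whose means differ by half the purity gap $\Delta_{(i)}$, Hoeffding's inequality gives the per-phase bound $\exp\left(-N_k\Delta_{(i)}^2/8\right)$, and the successive-rejects pigeonhole plus a union bound over phases yields the prefactor $\tfrac{K(K-1)}{2}$ together with a uniform exponent controlled by $N_k\Delta_{(K+1-k)}^2 \geq \Omega\left(NH_2/\overline{\log}(K)\right)$. If anything, your bookkeeping is more careful than the paper's: your budget check (two copies consumed per SWAP test) and your honest tracking of the factor $\tfrac12$ inside $N_k$ expose that the paper's appendix silently drops that factor when it lands on exactly $\tfrac18$ in the exponent, a discrepancy that only affects the constant.
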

\begin{proof-sketch}
    The outputs of the SWAP test are within the range $[0, 1]$ and are independent. Thus, we can apply the Hoeffding inequality to complete the proof. Detailed explanations of the proof are included in Appendix \ref{sec:proof_alg_CMPQSI_coherent}.
\end{proof-sketch}

By comparing the conclusions of Theorem \ref{thm:alg_IMPQSI}, Lemma 2\ref{lem2:PQSI_alg}, and Theorem \ref{thm:alg_CMPQSI_coherent}, we can find that under a mild condition that the purity gap is not quite small, i.e., $\Delta_i \gg \frac{1}{d^2}$, the probability of finding the purest quantum state using coherent measurement is much higher than that using incoherent measurement. This result also reflects the importance of the quantum memory.

\section{Conclusion and Outlook}
\label{sec: conclusion}

In this study, we propose a pivotal problem in quantum testing, termed purest quantum state identification (PQSI). This framework applies to various quantum computing and quantum communication tasks. We develop two distinct algorithms to address this problem under different settings. When the learner utilizes incoherent (single-copy) measurement, the upper bound on the error probability of our algorithm is given by $ \exp\left(- \Omega\left(\frac{N H_1}{\log(K) 2^n }\right) \right) $. When the learner is allowed to use coherent (two-copy) measurement, the upper bound on the error probability is given by $ \exp\left(- \Omega\left(\frac{N H_2}{\log(K) }\right) \right) $. By examining the error probabilities of these two algorithms, we can discern the advantage of the coherent measurement over the incoherent one. Furthermore, we establish that for any algorithm utilizing a randomly fixed incoherent two-outcome POVM to solve the PQSI, its error probability is lower bounded by $ \exp\left( - O\left(\frac{N H_1}{2^n}\right)\right) $. Our results lay the groundwork for further investigations into the best quantum state identification. We aim to establish a lower bound for PQSI problems across all POVM bases in future work. Several open questions remain to be addressed, including identifying the nearest quantum state with minimal trace distance and achieving the best quantum state identification with fixed confidence.

\bibliographystyle{alpha}
\bibliography{main}

\appendix

\section{Auxiliary tools}

\subsection{Probability inequalities for sums of bounded random variables}

In this paper, we utilize the following inequalities, which are provided for the sake of completeness.

\begin{theorem}[Chebyshev's Inequality]
    Let $X$ be any random variable with expected value $\mu =\mathbb{E}[X]$ and finite variance $\Var(X)$. Then, for any real number $\varepsilon >0$:
    \begin{equation}
        \mathbb{P}(|X -\mu| \geq \varepsilon) \leq \frac{\Var(X)}{\varepsilon^2}.
    \end{equation}
\end{theorem}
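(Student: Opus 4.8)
The plan is to derive Chebyshev's inequality as a straightforward consequence of Markov's inequality applied to the squared deviation $(X-\mu)^2$. First I would establish Markov's inequality: for a nonnegative random variable $Y$ and any $a>0$, one has $\mathbb{P}(Y \geq a) \leq \mathbb{E}[Y]/a$. The cleanest route is the pointwise bound $a\,\indicator{\{Y \geq a\}} \leq Y$, which holds because on the event $\{Y \geq a\}$ the left side equals $a \leq Y$, while off this event the left side is $0 \leq Y$ by nonnegativity of $Y$. Taking expectations of both sides and using monotonicity and linearity of expectation gives $a\,\mathbb{P}(Y \geq a) = \mathbb{E}[a\,\indicator{\{Y \geq a\}}] \leq \mathbb{E}[Y]$, and dividing by $a>0$ yields the claim.

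Second, I would apply this to the nonnegative random variable $Y = (X-\mu)^2$ with threshold $a = \varepsilon^2 > 0$. Since $(X-\mu)^2$ has finite expectation equal to $\Var(X)$ by hypothesis, Markov's inequality gives $\mathbb{P}\!\left((X-\mu)^2 \geq \varepsilon^2\right) \leq \mathbb{E}[(X-\mu)^2]/\varepsilon^2 = \Var(X)/\varepsilon^2$.

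Finally, I would identify the two relevant events. Because $t \mapsto t^2$ is strictly increasing on $[0,\infty)$ and $\varepsilon > 0$, the events $\{|X-\mu| \geq \varepsilon\}$ and $\{(X-\mu)^2 \geq \varepsilon^2\}$ coincide, so their probabilities are equal. Substituting this identity into the bound from the previous step delivers $\mathbb{P}(|X-\mu| \geq \varepsilon) \leq \Var(X)/\varepsilon^2$, which completes the argument.

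As for difficulty, this is a classical inequality and there is no genuine obstacle. The only point requiring mild care is the justification of the indicator bound underlying Markov's inequality (equivalently, discarding the contribution of the region $\{Y < a\}$ when integrating), together with the standing hypothesis that $\Var(X)$ is finite so that the right-hand side is well defined; everything else reduces to monotonicity of expectation and the elementary event equivalence above.
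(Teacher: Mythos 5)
Your proof is correct: the derivation of Markov's inequality from the pointwise bound $a\,\mathds{1}\{Y\geq a\}\leq Y$, its application to $Y=(X-\mu)^2$ with threshold $\varepsilon^2$, and the identification of the events $\{|X-\mu|\geq\varepsilon\}$ and $\{(X-\mu)^2\geq\varepsilon^2\}$ together constitute the standard, complete argument. Note that the paper itself states Chebyshev's inequality without proof, as a classical auxiliary tool, so there is no in-paper proof to compare against; your argument is the canonical one and fills that gap correctly.
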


\begin{theorem}[Hoeffding's Inequality]
     If $X_1, X_2, ...,X_n$ are independent with $\mathbb{P}(a \leq X_i \leq b) = 1$ and common mean $\mu$ then for any $\varepsilon > 0$ 
    \begin{equation}
        \mathbb{P}\left( \left|\frac{1}{n} \sum_{i=1}^n X_i - \mu\right| > \varepsilon\right) \leq 2 \exp\left(\frac{-2n\varepsilon^2}{(b-a)^2}\right).
    \end{equation}
\end{theorem}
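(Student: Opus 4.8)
The plan is to prove the two-sided bound by the standard exponential-moment (Chernoff) method, controlling each tail separately and then combining them with a union bound. First I would reduce to a one-sided statement: since
\[
\mathbb{P}\left(\left|\tfrac{1}{n}\sum_i X_i - \mu\right| > \varepsilon\right) \le \mathbb{P}\left(\tfrac{1}{n}\sum_i X_i - \mu > \varepsilon\right) + \mathbb{P}\left(\tfrac{1}{n}\sum_i X_i - \mu < -\varepsilon\right),
\]
it suffices to bound each summand by $\exp(-2n\varepsilon^2/(b-a)^2)$, which accounts for the stated factor of $2$. Passing to the centered variables $Y_i = X_i - \mu \in [a-\mu,\, b-\mu]$ with $\mathbb{E}[Y_i]=0$, the upper tail becomes $\mathbb{P}(\sum_i Y_i > n\varepsilon)$.

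For this tail I would apply Markov's inequality to the exponentiated sum: for any $s>0$,
\[
\mathbb{P}\left(\sum_i Y_i > n\varepsilon\right) = \mathbb{P}\left(e^{s\sum_i Y_i} > e^{sn\varepsilon}\right) \le e^{-sn\varepsilon}\,\mathbb{E}\left[e^{s\sum_i Y_i}\right] = e^{-sn\varepsilon}\prod_{i=1}^n \mathbb{E}\left[e^{sY_i}\right],
\]
where the final equality uses independence. The crux is then a bound on the moment generating function of a single bounded, mean-zero variable — Hoeffding's lemma: if $\mathbb{E}[Y]=0$ and $c \le Y \le d$, then $\mathbb{E}[e^{sY}] \le \exp(s^2(d-c)^2/8)$. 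I would establish this by analyzing the cumulant generating function $\psi(s) = \log \mathbb{E}[e^{sY}]$: one checks $\psi(0)=0$ and $\psi'(0)=\mathbb{E}[Y]=0$, and that $\psi''(s)$ equals the variance of $Y$ under the exponentially tilted law $d\mathbb{P}_s \propto e^{sy}\,d\mathbb{P}$; since any random variable supported on $[c,d]$ has variance at most $(d-c)^2/4$, we get $\psi''(s)\le (d-c)^2/4$ uniformly in $s$, and a second-order Taylor expansion $\psi(s)=\tfrac{s^2}{2}\psi''(\xi)$ yields $\psi(s)\le s^2(d-c)^2/8$. Applying this with $d-c = b-a$ gives $\mathbb{E}[e^{sY_i}] \le \exp(s^2(b-a)^2/8)$ for each $i$.

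Combining the two displays, the upper tail is at most $\exp\!\left(-sn\varepsilon + ns^2(b-a)^2/8\right)$ for every $s>0$. I would then optimize the exponent over $s$: the quadratic $-s\varepsilon + s^2(b-a)^2/8$ is minimized at $s = 4\varepsilon/(b-a)^2$, producing the value $-2\varepsilon^2/(b-a)^2$ and hence $\mathbb{P}(\sum_i Y_i > n\varepsilon) \le \exp(-2n\varepsilon^2/(b-a)^2)$. The lower tail follows by running the identical argument on $-Y_i$ (equivalently on $b-X_i$, which has the same range $b-a$), and the union bound from the first step then delivers the claimed inequality.

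The main obstacle is Hoeffding's lemma, and in particular securing the sharp constant $1/8$ in the MGF bound, since a looser constant here would directly degrade the final exponent. The key quantitative input is the variance bound $(d-c)^2/4$ for distributions supported on an interval, which itself follows because the variance-maximizing law on $[c,d]$ is the two-point distribution placing equal mass on the endpoints. Everything after the MGF bound — the factorization over $i$, the $s$-optimization, and the symmetrization — is routine.
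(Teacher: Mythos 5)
Your proof is correct: the union-bound reduction to one-sided tails, the Chernoff step using independence, Hoeffding's lemma with the sharp constant $1/8$ (obtained via the tilted-measure second derivative $\psi''(s)\le (d-c)^2/4$), and the optimization at $s = 4\varepsilon/(b-a)^2$ all check out and produce exactly the stated exponent $-2n\varepsilon^2/(b-a)^2$. The paper states this inequality without proof, as a standard auxiliary tool in its appendix, so there is no in-paper argument to compare against; what you have written is the canonical textbook proof.
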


\begin{theorem}[Bernstein's Inequality]
    If $X_1,...,X_n$ are independent bounded random variables such that $\mathbb{E}[X_i] = 0$ for all $i \in \{1,...,n\}$ and $\mathbb{P}(|X_i| \leq c) =1$ then, for any $\epsilon>0$,
    \begin{equation}
        \mathbb{P}\left(\left|\frac{1}{n}\sum_{i=1}^n \right| \geq \varepsilon\right) \leq \exp\left(-\frac{n\varepsilon^2}{2 \sigma^2 + 2c\epsilon /3}\right),
    \end{equation}
    where $\sigma^2 = \frac{1}{n}\sum_{i=1}^n \Var(X_i)$.
\end{theorem}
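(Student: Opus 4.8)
The plan is to prove the bound by the standard exponential-moment (Chernoff) method. Write $S = \sum_{i=1}^n X_i$, so that the event $\{\frac{1}{n}\sum_{i=1}^n X_i \geq \varepsilon\}$ coincides with $\{S \geq n\varepsilon\}$. First I would fix a parameter $\lambda > 0$ and apply Markov's inequality to the nonnegative random variable $e^{\lambda S}$, obtaining $\mathbb{P}(S \geq n\varepsilon) \leq e^{-\lambda n\varepsilon}\,\mathbb{E}[e^{\lambda S}]$. Since the $X_i$ are independent, the moment generating function factorizes, $\mathbb{E}[e^{\lambda S}] = \prod_{i=1}^n \mathbb{E}[e^{\lambda X_i}]$, which reduces the whole problem to controlling each single-variable MGF.

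Second, I would bound each factor $\mathbb{E}[e^{\lambda X_i}]$ using the two hypotheses, namely $\mathbb{E}[X_i] = 0$ and $\mathbb{P}(|X_i| \leq c) = 1$. Expanding the exponential and discarding the linear term (which vanishes since $\mathbb{E}[X_i] = 0$) gives $\mathbb{E}[e^{\lambda X_i}] = 1 + \sum_{k \geq 2} \frac{\lambda^k \mathbb{E}[X_i^k]}{k!}$. Boundedness yields the moment estimate $|\mathbb{E}[X_i^k]| \leq c^{k-2}\,\mathbb{E}[X_i^2] = c^{k-2}\Var(X_i)$ for every $k \geq 2$. Substituting this and summing the resulting series with the elementary inequality $\sum_{k \geq 2} \frac{u^k}{k!} \leq \frac{u^2/2}{1 - u/3}$ (valid for $0 \leq u < 3$, which follows from $k! \geq 2\cdot 3^{k-2}$) with $u = \lambda c$, then using $1 + x \leq e^x$, I obtain the clean bound $\mathbb{E}[e^{\lambda X_i}] \leq \exp\left(\frac{\lambda^2 \Var(X_i)/2}{1 - \lambda c/3}\right)$ for $0 \leq \lambda < 3/c$.

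Third, I would take the product over $i$, writing $\sum_{i=1}^n \Var(X_i) = n\sigma^2$, to get $\mathbb{E}[e^{\lambda S}] \leq \exp\left(\frac{\lambda^2 n\sigma^2/2}{1 - \lambda c/3}\right)$, and hence $\mathbb{P}(S \geq n\varepsilon) \leq \exp\left(-\lambda n\varepsilon + \frac{\lambda^2 n\sigma^2/2}{1 - \lambda c/3}\right)$. The final step is to choose $\lambda$ optimally. Taking $\lambda = \frac{\varepsilon}{\sigma^2 + c\varepsilon/3}$, which lies in the admissible range $[0, 3/c)$, makes $1 - \lambda c/3 = \frac{\sigma^2}{\sigma^2 + c\varepsilon/3}$, and a short simplification collapses the exponent to exactly $-\frac{n\varepsilon^2}{2\sigma^2 + 2c\varepsilon/3}$. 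Applying the identical argument to the variables $-X_i$ controls the lower tail $\mathbb{P}(S \leq -n\varepsilon)$ by the same quantity, so the two-sided statement follows (the two tails being combined into the stated bound).

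The main obstacle I expect is the careful bookkeeping in the second and fourth steps: one must establish the moment bound $|\mathbb{E}[X_i^k]| \leq c^{k-2}\Var(X_i)$ and the series inequality so that the denominator emerges as precisely $1 - \lambda c/3$, and then verify that the specific choice $\lambda = \frac{\varepsilon}{\sigma^2 + c\varepsilon/3}$ produces exactly the factor $2\sigma^2 + 2c\varepsilon/3$ rather than a weaker constant. These manipulations are routine but must be executed exactly to recover the stated exponent.
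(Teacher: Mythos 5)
Your proof is correct, and it is the standard Chernoff--Bernstein argument; note that the paper itself states this inequality without proof, as an imported auxiliary tool ``provided for the sake of completeness,'' so there is no internal proof to compare against. Every step of your outline checks out: the moment bound $|\mathbb{E}[X_i^k]| \leq c^{k-2}\Var(X_i)$ holds because $|X_i|^k \leq c^{k-2}X_i^2$ almost surely for $k \geq 2$; the series bound $\sum_{k\geq 2} u^k/k! \leq \frac{u^2/2}{1-u/3}$ follows from $k! \geq 2\cdot 3^{k-2}$ (valid by induction for $k\geq 2$); the choice $\lambda = \varepsilon/(\sigma^2 + c\varepsilon/3)$ lies in $[0,3/c)$ and makes the exponent collapse exactly to $-n\varepsilon^2/(2\sigma^2 + 2c\varepsilon/3)$, since $\frac{\lambda^2\sigma^2/2}{1-\lambda c/3} = \lambda\varepsilon/2$. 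The one caveat is your last step: the union bound over the two tails yields $2\exp\left(-\frac{n\varepsilon^2}{2\sigma^2+2c\varepsilon/3}\right)$, not the prefactor-free bound displayed in the theorem, so the statement as written (with the absolute value, which moreover contains a typo --- the summand $X_i$ is missing inside $\left|\frac{1}{n}\sum_{i=1}^n\right|$) should either be one-sided or carry a factor of $2$. That is a defect of the paper's statement rather than of your argument, and it is immaterial to how the inequality is used in the paper, where constant prefactors are absorbed into the $\Omega(\cdot)$ notation.
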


\subsection{Properties of Haar unitary matrix}

For a locally compact topological group, its Haar measure is a unique nonzero left-invariant measure  (or right-invariant, depending on the formulation) under group operations. The Haar unitary matrix is the Haar measure on the unitary matrix group and is the concept of drawing unitary matrices uniformly at random. The formal definition of Haar unitary matrix is as follows:

\begin{definition}
    The Haar unitary matrix is the unique probability measure $\mu_{H}$ that is both left and right invariant over the unitary matrix group, i.e., for all integrable functions $f$ and for all unitary matrix $V$, we have:
    \begin{equation}
        \int_{U \sim \Haar}f(U)dU = \int_{U \sim \Haar}f(UV)dU = \int_{U \sim \Haar}f(VU)dU.
    \end{equation}
\end{definition}

For any unit column vector $\bx \in \mathbb{C}^d$, we have 
\begin{equation}
    \mathbb{E}_{U \sim \Haar}\left[f(U\bx)\right] = \mathbb{E}_{\psi \sim \mathbb{C}^d}\left[f(|\psi\rangle)\right].
\end{equation}

We will use the following lemma to complete our proofs in this paper.

\begin{lemma}[see Lemma 22 of Ref.\cite{anshu2022distributed}]
    \label{lem:Haar_meausrement_expectation}
    Let $A,B,C$ be Hermitian matrices. Then
    \begin{equation}
        \mathbb{E}_{\psi \sim \mathbb{C}^d} \langle \psi| A | \psi \rangle = \frac{1}{d}\Tr(A)
    \end{equation}
    and
    \begin{equation}
        \mathbb{E}_{\psi \sim \mathbb{C}^d} \langle \psi| A | \psi \rangle \langle \psi| B | \psi \rangle = \frac{1}{d(d+1)}(\Tr(A)\Tr(B)+\Tr(AB))
    \end{equation}
    and
    \begin{equation}
        \begin{aligned}
            \mathbb{E}_{\psi \sim \mathbb{C}^d} \langle \psi| A | \psi \rangle \langle \psi| B | \psi \rangle \langle \psi| C | \psi \rangle & = \frac{1}{d(d+1)(d+2)}(\Tr(A)\Tr(B)\Tr(C)+\Tr(AB)\Tr(C) \\
            &+ \Tr(A)\Tr(BC) + \Tr(CA)\Tr(B) + \Tr(ABC)).
        \end{aligned}
    \end{equation}
\end{lemma}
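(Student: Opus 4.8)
The plan is to reduce all three identities to a single structural fact: the $t$-th moment of the rank-one projector $|\psi\rangle\langle\psi|$ of a Haar-random state is proportional to the projector onto the symmetric subspace of $(\C^d)^{\otimes t}$. First I would rewrite each quadratic-form product as a trace against a tensor power,
\[
\prod_{k=1}^t \langle\psi| X_k |\psi\rangle = \Tr\!\left( (X_1 \otimes \cdots \otimes X_t)\, (|\psi\rangle\langle\psi|)^{\otimes t} \right),
\]
valid because $\langle\psi|X_k|\psi\rangle = \Tr(X_k|\psi\rangle\langle\psi|)$ and traces of tensor products factorize. After exchanging expectation and trace, the entire problem reduces to evaluating $\bE_{\psi}\big[(|\psi\rangle\langle\psi|)^{\otimes t}\big]$ for $t \in \{1,2,3\}$.

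The key step is to establish
\[
\bE_{\psi \sim \C^d}\big[(|\psi\rangle\langle\psi|)^{\otimes t}\big] = \frac{1}{d(d+1)\cdots(d+t-1)} \sum_{\pi \in S_t} W_\pi,
\]
where $W_\pi$ permutes the $t$ tensor factors according to $\pi \in S_t$. I would prove this via Schur--Weyl duality: using the identity $\bE_{U\sim\Haar}[f(U\bx)] = \bE_{\psi\sim\C^d}[f(|\psi\rangle)]$ stated earlier, the left-hand side equals $\bE_{U\sim\Haar}\big[(U|0\rangle\langle 0|U^\dagger)^{\otimes t}\big]$, which by left-invariance of Haar measure is unchanged under conjugation by $V^{\otimes t}$ for every $V \in \mathbb{U}(d)$; hence it commutes with the diagonal $\mathbb{U}(d)$-action and, by Schur--Weyl, lies in the span of $\{W_\pi\}_{\pi \in S_t}$. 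By the symmetry of the construction all coefficients coincide, so the average is a scalar multiple of $P_{\mathrm{sym}}^{(t)} = \tfrac{1}{t!}\sum_\pi W_\pi$; fixing the scalar by taking the trace and using $\Tr P_{\mathrm{sym}}^{(t)} = \binom{d+t-1}{t}$, the dimension of the symmetric subspace, yields the displayed formula.

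With this in hand I would finish by computing the permutation traces $\Tr\big((X_1\otimes\cdots\otimes X_t) W_\pi\big) = \prod_{c}\Tr\!\big(\prod_{k\in c} X_k\big)$, one trace-product per cycle $c$ of $\pi$, and specializing. For $t=1$ the single term gives $\Tr(A)/d$; for $t=2$ the identity and the transposition give $\tfrac{1}{d(d+1)}(\Tr(A)\Tr(B)+\Tr(AB))$; for $t=3$ the six elements of $S_3$ produce the identity term $\Tr(A)\Tr(B)\Tr(C)$, the three transposition terms $\Tr(AB)\Tr(C)$, $\Tr(A)\Tr(BC)$, $\Tr(CA)\Tr(B)$, and the two three-cycle terms $\Tr(ABC)$ and $\Tr(ACB)$ (which for Hermitian arguments combine to $2\,\mathrm{Re}\,\Tr(ABC)$), all divided by $d(d+1)(d+2)$. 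I expect the main obstacle to be the non-elementary step of identifying the commutant and pinning down the normalization; a fully self-contained alternative is to model $|\psi\rangle = g/\|g\|$ with $g$ a standard complex Gaussian vector, use independence of magnitude and direction to write $\bE[(gg^\dagger)^{\otimes t}] = \bE[\|g\|^{2t}]\,\bE[(|\psi\rangle\langle\psi|)^{\otimes t}]$, evaluate the left side by Wick's theorem as $\sum_{\pi\in S_t} W_\pi$, and divide by the rising-factorial moment $\bE[\|g\|^{2t}] = d(d+1)\cdots(d+t-1)$, reproducing the same permutation sum.
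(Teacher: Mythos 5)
Your proof is correct, but it necessarily takes a different route from the paper, because the paper supplies no proof at all: it imports the lemma wholesale as Lemma 22 of \cite{anshu2022distributed}. Your reduction of $\prod_k \langle\psi|X_k|\psi\rangle$ to $\Tr\big((X_1\otimes\cdots\otimes X_t)\,\bE[(|\psi\rangle\langle\psi|)^{\otimes t}]\big)$, followed by the identification $\bE[(|\psi\rangle\langle\psi|)^{\otimes t}] = \binom{d+t-1}{t}^{-1} P_{\mathrm{sym}}^{(t)}$ and the cycle-type evaluation of permutation traces, is the standard argument and gives a self-contained derivation the paper lacks. One small repair to the commutant step: Schur--Weyl only places the expectation in $\mathrm{span}\{W_\pi\}$, and to see that all coefficients coincide you should additionally use $W_\sigma\,\bE[(|\psi\rangle\langle\psi|)^{\otimes t}] = \bE[(|\psi\rangle\langle\psi|)^{\otimes t}]$ (since $W_\sigma$ fixes $|\psi\rangle^{\otimes t}$) together with linear independence of $\{W_\pi\}$, which requires $d\geq t$; your Gaussian/Wick alternative sidesteps both the irreducibility bookkeeping and the $d\geq t$ restriction, so it is the cleaner of your two normalization routes and valid for all $d$.

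More substantively, your faithful evaluation of the $S_3$ sum exposes an error in the statement as printed in the paper: the six permutations produce six trace terms, including \emph{both} three-cycles $\Tr(ABC)$ and $\Tr(ACB)$, and for Hermitian $A,B,C$ one has $\Tr(ACB)=\overline{\Tr(ABC)}$, so the last term should be $\Tr(ABC)+\Tr(ACB)=2\,\mathrm{Re}\,\Tr(ABC)$ rather than the single $\Tr(ABC)$ displayed in the lemma. A one-line sanity check confirms this: at $d=1$ with $A=B=C=1$ the left side is $1$, while the paper's right side gives $5/6$ and the six-term version gives $6/6$. The omission is harmless downstream, since the paper only ever invokes the first- and second-moment identities (with rank-one basis projectors, for which the triple trace is real in any case), but your version of the third-moment formula is the correct one and matches the cited source.
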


\section{Proof of IMPQSI with incoherent measurement}

\subsection{Proof of property for purity collision}

    \begin{lemma} \label{lem:purity_collision_property}
        The expectation and variance of the purity estimation satisfying
        \begin{equation}
            \mathbb{E} [\tilde{g}]  = \frac{m-1}{m}\sum_{i=0}^{d-1} p_i^2,    
        \end{equation}
        and
        \begin{equation}
            \Var[\tilde{g}] \leq \frac{\mathbb{E} [2\tilde{g}]}{m^2} + \frac{4}{m} \sum_{i=0}^{d-1} p_i^3.
        \end{equation}
    \end{lemma}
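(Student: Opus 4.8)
The plan is to treat everything conditionally on a fixed measurement basis, so I write $p_i = \langle i| U_j \rho U_j^\dagger | i\rangle$ for the outcome probabilities and note that, given the basis, the $m$ measurement results (abbreviate $x_l := x(\rho,j,l)$) are i.i.d.\ categorical draws with law $(p_0,\dots,p_{d-1})$. The decisive first move is to rewrite $\tilde{g}$ as a \emph{collision count}: expanding the square and splitting diagonal from off-diagonal terms gives
\[
\sum_{i=0}^{d-1}\Big[\sum_{l=1}^m \indicator\{x_l = i\}\Big]^2 = \sum_{l,l'=1}^m \indicator\{x_l = x_{l'}\} = m + \sum_{l \neq l'} \indicator\{x_l = x_{l'}\},
\]
so that $\tilde{g} = \frac{1}{m^2}\sum_{l\neq l'}\indicator\{x_l = x_{l'}\}$, a sum over the $m(m-1)$ ordered pairs of distinct draws. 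This form makes both moments directly computable.

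For the expectation, independence gives $\mathbb{E}[\indicator\{x_l = x_{l'}\}] = \sum_i p_i^2$ for every ordered pair $l \neq l'$, and since there are $m(m-1)$ such pairs I immediately obtain $\mathbb{E}[\tilde{g}] = \frac{m(m-1)}{m^2}\sum_i p_i^2 = \frac{m-1}{m}\sum_i p_i^2$, which is the first claim.

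For the variance I set $Y = \sum_{l\neq l'}\indicator\{x_l=x_{l'}\}$ so that $\Var[\tilde{g}] = \Var(Y)/m^4$, and expand $\Var(Y) = \sum\sum \mathrm{Cov}$ over pairs of ordered index-pairs, classified by the intersection size $|\{l,l'\}\cap\{s,s'\}|$. Disjoint index-pairs (size $0$) involve disjoint independent draws and contribute zero covariance. Index-pairs that coincide as sets (size $2$) each contribute $\sum_i p_i^2 - (\sum_i p_i^2)^2$, and there are $2m(m-1)$ of them. Index-pairs sharing exactly one index (size $1$) involve three distinct draws; the product indicator forces all three equal, so the covariance is $\sum_i p_i^3 - (\sum_i p_i^2)^2$, and a direct enumeration gives $4m(m-1)(m-2)$ such terms. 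Assembling these and dividing by $m^4$ produces an exact formula for $\Var[\tilde{g}]$.

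Finally I bound that formula: discarding the negative $-(\sum_i p_i^2)^2$ pieces (which only lowers the variance), the size-$2$ terms contribute at most $\frac{2(m-1)}{m^3}\sum_i p_i^2 = \frac{2\mathbb{E}[\tilde{g}]}{m^2}$, while the size-$1$ terms contribute at most $\frac{4(m-1)(m-2)}{m^3}\sum_i p_i^3 \leq \frac{4}{m}\sum_i p_i^3$, using $(m-1)(m-2)\leq m^2$. Summing the two bounds gives exactly the stated inequality. The main obstacle is the variance step: getting the covariance structure of the collision statistic right — recognizing that only the size-$1$ and size-$2$ overlaps survive, matching each to the correct moment ($\sum_i p_i^3$ versus $\sum_i p_i^2$), and obtaining the combinatorial counts $2m(m-1)$ and $4m(m-1)(m-2)$. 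Once $\tilde{g}$ is expressed as a collision count, everything else is routine.
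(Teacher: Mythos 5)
Your proof is correct and takes essentially the same route as the paper: both rewrite $\tilde{g}$ as a pairwise collision count $\frac{1}{m^2}\sum_{l\neq l'}\indicator\{x_l=x_{l'}\}$ and classify pairs of ordered index-pairs by overlap size, with the one-shared-index terms producing the $\frac{4}{m}\sum_i p_i^3$ contribution and the coinciding pairs producing the $\frac{2\mathbb{E}[\tilde{g}]}{m^2}$ contribution. Your covariance formulation, under which the disjoint index-pairs vanish identically, is in fact slightly tidier than the paper's direct expansion of $\mathbb{E}[\tilde{g}^2]$ (whose intermediate coefficients are stated loosely before arriving at the same final bound), but the underlying argument is identical.
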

    \begin{proof}
        
    \label{proof: purity_collision_property}
    The expectation of $\tilde{g}$ satisfying
    \begin{equation}
        \begin{aligned}
            \mathbb{E} [\tilde{g}] = & \mathbb{E}\left[\frac{1}{m^2} \sum_{i=0}^{d-1} \left[ \sum_{j=1}^m \indicator\{x_j =i\}\right]^2\right] - \frac{1}{m}\\
            = & \mathbb{E}\left[\frac{1}{m^2} \sum_{i=0}^{d-1} \left[ \sum_{j=1}^m \sum_{k=1}^m \indicator\{x_j =i\}\indicator\{x_k =i\}\right]\right] - \frac{1}{m}\\
            = & \mathbb{E}\left[\frac{1}{m^2} \sum_{i=0}^{d-1} \left[ \sum_{j=1}^m \sum_{k=1}^m \indicator\{x_j =i\}\indicator\{x_k =i\}\right]\right] - \frac{1}{m}\\
            = & \mathbb{E}\left[\frac{1}{m^2} \sum_{i=0}^{d-1} \left[ \sum_{j=1}^m \indicator\{x_j =i\}+ \sum_{j=1}^m \sum_{k\neq j} \indicator\{x_j =i\}\indicator\{x_k =i\}\right]\right] - \frac{1}{m}\\
            = & \frac{1}{m^2} \sum_{i=0}^{d-1}  \sum_{j=1}^m \mathbb{E}\left[\indicator\{x_j =i\}\right]+ \frac{1}{m^2} \sum_{j=1}^m \sum_{k\neq j} \mathbb{E} [\indicator\{x_j =i\}\indicator\{x_k =i\}] - \frac{1}{m}\\
            = & \frac{m}{m^2} + \frac{m-1}{m} \sum_{j=1}^m p_j^2 - \frac{1}{m}
            = \frac{m-1}{m} \sum_{j=1}^m p_j^2.
        \end{aligned}
    \end{equation}
    The expectation of $\tilde{g}^2$ satisfying
    \begin{equation*}
        \begin{aligned}
            \mathbb{E} [\tilde{g}^2] = & \mathbb{E}\left[\left[\frac{1}{m^2}  \sum_{i=0}^{d-1}\left[ \sum_{j=1}^m \indicator\{x_j =i\}\right]^2 -\frac{1}{m}\right]^2\right] \\
            = & \frac{1}{m^4} \mathbb{E} \left[\sum_{j_1\neq j_2,l_1 \neq l_2}^m\left[\sum_{i,k}^{d-1} \indicator\{x_{j_1} =i\} \indicator\{x_{j_2} =i\} \indicator\{x_{l_1} =k\} \indicator\{x_{l_2} =k\}\right] \right]\\
            = & \frac{1}{m^4} \left[ 4m(m-1)\mathbb{E}[\tilde{g}] + m(m-1)(m-2)(m-3)\mathbb{E}[\tilde{g}]^2 + 4m(m-1)(m-2)\sum_{j=1}^n p_j^3\right]\\
            \leq & \frac{2}{m^2} \mathbb{E}[\tilde{g}] + \mathbb{E}[\tilde{g}]^2+ \frac{4}{m} \sum_{i=0}^{d-1} p_i^3
        \end{aligned}
    \end{equation*}

    Then, we have
    
    \begin{equation}
        \begin{aligned}
            \Var[\tilde{g}] & = \mathbb{E}[\tilde{g}^2] - \mathbb{E}[\tilde{g}]^2\\
            & \leq \frac{2}{m^2} \mathbb{E}[\tilde{g}] + \frac{4}{m} \sum_{i=0}^{d-1} p_i^3.
        \end{aligned}
    \end{equation}

    \end{proof}

\subsection{proof of Theorem \ref{thm:alg_IMPQSI}}
\label{subsec:PQSI_alg1}

    By using the techniques similar to  \cite{anshu2022distributed}, we can prove the following lemma:
    \begin{lemma}[See Lemma 16 of \cite{anshu2022distributed}]
    \label{lem: g_var}
        The expectation of $w(\rho,k)$ and $\tilde{g}_{\rho,j}$ in Algorithm \ref{alg:PQSI_incoherent} satisfies
        \begin{equation}
            \mathbb{E}[w(\rho,k)] = \mathbb{E}[\tilde{g}_{\rho,j}] = \frac{(m-1)(1+ \Tr(\rho^2))}{m(d+1)},
        \end{equation}
        
        and the variance of $\tilde{g}(\rho,j)$ satisfies
        \begin{equation}
            \Var(\tilde{g}(\rho,j)) = O\left(\frac{1}{d^3} + \frac{1}{m^2d} + \frac{1}{md^2}\right).
        \end{equation}
    \end{lemma}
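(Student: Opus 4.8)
The plan is to condition on the random unitary $U_j$ and then average over the Haar measure, combining the conditional moment bounds of Lemma \ref{lem:purity_collision_property} with the Haar integration formulas of Lemma \ref{lem:Haar_meausrement_expectation}. Given $U_j$, the $m$ outcomes $x(\rho,j,1),\dots,x(\rho,j,m)$ are i.i.d.\ draws from the distribution $p_i = \langle i| U_j \rho U_j^\dagger |i\rangle$, so Lemma \ref{lem:purity_collision_property} applies verbatim conditionally, giving $\mathbb{E}[\tilde g(\rho,j)\mid U_j] = \frac{m-1}{m}\sum_{i} p_i^2$ together with $\Var(\tilde g(\rho,j)\mid U_j) \le \frac{2}{m^2}\mathbb{E}[\tilde g(\rho,j)\mid U_j] + \frac{4}{m}\sum_i p_i^3$.

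For the expectation I would apply the tower rule and observe that for each fixed $i$ the vector $|\psi_i\rangle = U_j^\dagger|i\rangle$ is marginally Haar-distributed on $\mathbb{C}^d$, so $p_i = \langle\psi_i|\rho|\psi_i\rangle$. Taking $A=B=\rho$ in the second identity of Lemma \ref{lem:Haar_meausrement_expectation} and using $\Tr(\rho)=1$ yields $\mathbb{E}_U[p_i^2] = \frac{1+\Tr(\rho^2)}{d(d+1)}$; summing the $d$ identical terms gives $\mathbb{E}[\tilde g(\rho,j)] = \frac{(m-1)(1+\Tr(\rho^2))}{m(d+1)}$. Since $w(\rho,k)$ is an empirical average of the $\tilde g(\rho,j)$, its expectation coincides.

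For the variance I would use the law of total variance,
\begin{equation}
\Var(\tilde g(\rho,j)) = \mathbb{E}_U\big[\Var(\tilde g(\rho,j)\mid U)\big] + \Var_U\big(\mathbb{E}[\tilde g(\rho,j)\mid U]\big).
\end{equation}
The first summand is controlled by the conditional bound above: $\mathbb{E}[\tilde g] = O(1/d)$ gives $\frac{2}{m^2}\mathbb{E}[\tilde g] = O(\frac{1}{m^2 d})$, and the third-moment identity of Lemma \ref{lem:Haar_meausrement_expectation} (with $A=B=C=\rho$) yields $\mathbb{E}_U[\sum_i p_i^3] = \frac{1+3\Tr(\rho^2)+\Tr(\rho^3)}{(d+1)(d+2)} = O(1/d^2)$, so $\frac{4}{m}\mathbb{E}_U[\sum_i p_i^3] = O(\frac{1}{md^2})$. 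These two contributions reproduce exactly the $\frac{1}{m^2 d}$ and $\frac{1}{md^2}$ terms in the claim.

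The remaining piece, and the genuine obstacle, is the second summand $\Var_U(\sum_i p_i^2)$, which must be shown to be $O(1/d^3)$ to produce the stated $\frac{1}{d^3}$ term. Expanding $\mathbb{E}_U[(\sum_i p_i^2)^2] = \sum_{i,i'}\mathbb{E}_U[p_i^2 p_{i'}^2]$ forces a fourth-order Haar integral (degree four in both $U$ and $U^\dagger$), and crucially the $p_i$ are \emph{not} independent: they are generated by the single orthonormal frame $\{U^\dagger|i\rangle\}_i$, so the joint moments of distinct $p_i,p_{i'}$ do not factor and one cannot simply reuse the single-vector formulas. Handling this correlation is precisely where the unitary $4$-design hypothesis of Algorithm \ref{alg:PQSI_incoherent} enters. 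I would evaluate the diagonal terms $\mathbb{E}_U[p_i^4]$ and the off-diagonal terms $\mathbb{E}_U[p_i^2 p_{i'}^2]$ via Weingarten calculus (or by invoking the corresponding moment computation in Lemma 16 of \cite{anshu2022distributed}), check that the leading $O(1/d^2)$ contributions cancel against $\mathbb{E}_U[\sum_i p_i^2]^2$, and retain only the $O(1/d^3)$ residual. Adding the three bounds then gives $\Var(\tilde g(\rho,j)) = O(\frac{1}{d^3} + \frac{1}{m^2 d} + \frac{1}{md^2})$, completing the proof.
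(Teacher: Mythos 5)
Your proposal is correct and follows essentially the same route as the paper: the paper likewise rests on the conditional collision statistics of Lemma \ref{lem:purity_collision_property} combined with Haar moment integration, and for the one genuinely hard piece---your $\Var_U\left(\sum_i p_i^2\right) = O(1/d^3)$ fourth-moment cancellation---the paper does not carry out the Weingarten computation at all but simply invokes Lemma 16 of \cite{anshu2022distributed}, which is exactly the fallback you name. Since you derive the expectation and the $O\left(\frac{1}{m^2d}\right)$ and $O\left(\frac{1}{md^2}\right)$ terms explicitly from the paper's own lemmas via the tower rule and the law of total variance, your argument is at least as complete as the paper's treatment, which states the lemma with only the remark that it follows by ``techniques similar to'' that reference.
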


    By the definition of $w(\cdot,\cdot)$ and the definition of $\Delta_{(\cdot)}$, we have
    \begin{equation*}
        \begin{aligned}
            & \mathbb{P}(w({\rho^\star,k}) \leq w({\rho_{(i)},k})) \\
            = & \mathbb{P}\left( (w(\rho_{(i)},k) -  w(\rho^\star,k))  \geq \frac{(m-1)\Delta_{(i)}}{m}\right).
        \end{aligned}
    \end{equation*}
    Since $w(\cdot,\cdot) \in [0,1]$, by Lemma \ref{lem: g_var} and Bernstein's inequality, we have
    \begin{equation}
        \begin{aligned}
            & \mathbb{P}\left( (w(\rho_{(i)},k) -  w(\rho^\star,k))  \geq \frac{(m-1)\Delta_{(i)}}{m}\right) \\
            \leq & \exp\left(-\frac{\lfloor \frac{N_k}{m}\rfloor \left(\frac{m-1}{m(d+1)}\Delta_{(i)}\right)^2}{O(\frac{1}{d^3} + \frac{1}{m^2d} + \frac{1}{md^2}) + \frac{2\Delta_{(i)}}{3(d+1)}}\right) \\
            \leq & \exp\left( -\Omega\left( \frac{\sqrt{c}N_k \Delta_{(i)}}{d}\right)\right).
        \end{aligned}
    \end{equation}

    By a union bound of error probability, we have 

    \begin{equation} \label{eq:err_upper1}
        \begin{aligned} 
            e_n & \leq \sum_{k=1}^{K-1}\sum_{i = K+1-k}^K \mathbb{P}(w({\rho^\star,k}) \leq w({\rho_{(i)},n_k})) \\
            & \leq \sum_{k=1}^{K-1}\sum_{i = K+1-k}^K \exp\left( - \Omega\left(\frac{\sqrt{c}N_k \Delta_{(i)}}{d}\right)\right) \\
            &  \leq \sum_{k=1}^{K-1} k \exp\left( -\Omega\left( \frac{\sqrt{c}N_k \Delta_{(K+1-k)}}{d}\right)\right). \\
        \end{aligned}
    \end{equation}
    By definition of $N_k$, we have
    \begin{equation} \label{eq:err_upper2}
        \begin{aligned}
            & \frac{\sqrt{c}N_k \Delta_{(K+1-k)}}{d} \\
            = & \left\lceil \frac{\sqrt{c}}{\overline{\log}(K)} \frac{N-K}{K+1-k}\right\rceil\frac{\Delta_{(K+1-k)}}{d} \\
            = &\Theta\left( \frac{\sqrt{c}N}{\overline{\log}(K)} \times \frac{\Delta_{(K+1-k)}}{K+1-k}\right). \\
        \end{aligned}
    \end{equation}
    Combining equation \eqref{eq:err_upper1} and \eqref{eq:err_upper2}, we have
    \begin{equation*}
        \begin{aligned}
            e_n & \leq \sum_{k=1}^{K-1} k \exp\left( -\Omega\left(\frac{\sqrt{c}N}{\overline{\log}(K)d} \times\frac{\Delta_{(K+1-k)}}{K+1-k}\right)\right) \\
            & \leq \frac{K(K-1)}{2} \exp\left( -\Omega\left(\frac{\sqrt{c}NH_1}{\overline{\log}(K)d}\right)\right)
        \end{aligned}
    \end{equation*}
    where $H_1 = \min_{i \in \{1,...,K\}} \frac{\Delta_{(i)}}{i}$.

\subsection{proof of Lemma \ref{lem2:PQSI_alg}}
\label{subsec:PQSI_alg2}

\begin{proof}
    By the definition of $w(\cdot,\cdot)$ and the definition of $\Delta_{(\cdot)}$, we have
    \begin{equation*}
        \begin{aligned}
            & \mathbb{P}(w({\rho^\star,k}) \leq w({\rho_{(i)},k})) \\
            = & \mathbb{P}\left( (w(\rho_{(i)},k) -  w(\rho^\star,k))  \geq \frac{(m-1)\Delta_{(i)}}{m}\right).
        \end{aligned}
    \end{equation*}
    Since $w(\cdot,\cdot) \in [0,1]$, by Lemma \ref{lem: g_var} and Bernstein's inequality, we have
    \begin{equation}
        \begin{aligned}
            & \mathbb{P}\left( (w(\rho_{(i)},k) -  w(\rho^\star,k))  \geq \frac{(m-1)\Delta_{(i)}}{m}\right) \\
            \leq & \exp\left(-\frac{\lfloor \frac{N_k}{d}\rfloor \left(\frac{m-1}{m(d+1)}\Delta_{(i)}\right)^2}{O(\frac{1}{d^3}) + \frac{2\Delta_{(i)}}{3(d+1)}}\right) \\
            \leq & \exp\left( -\Omega\left( \min\left( \frac{N_k \Delta_{(i)}}{d^2}, N_k \Delta_{(i)}^2\right)\right)\right).
        \end{aligned}
    \end{equation}

    By a union bound of error probability, we have 

    \begin{equation} \label{eq:error_upper1}
        \begin{aligned} 
            e_n & \leq \sum_{k=1}^{K-1}\sum_{i = K+1-k}^K \mathbb{P}(w({\rho^\star,k}) \leq w({\rho_{(i)},n_k})) \\
            & \leq \sum_{k=1}^{K-1}\sum_{i = K+1-k}^K \exp\left( - \Omega\left( \min\left( \frac{N_k \Delta_{(i)}}{d^2}, N_k \Delta_{(i)}^2\right)\right)\right) \\
            &  \leq \sum_{k=1}^{K-1} k \exp\left( -\Omega\left( \min\left( \frac{N_k \Delta_{(K+1-k)}}{d^2}, N_k \Delta_{(K+1-k)}^2\right)\right)\right). \\
        \end{aligned}
    \end{equation}
    By definition of $N_k$, and combining equation \eqref{eq:error_upper1}, we have
    \begin{equation*}
    \begin{aligned}
        e_N  \leq \frac{K(K-1)}{2} \exp\left(- \Omega\left(\min\left(\frac{N H_2}{\overline{\log}(K)}, \frac{N H_1}{\overline{\log}(K)d^2}\right)\right) \right),
    \end{aligned}
    \end{equation*}
    where $H_1 = \min_{i \in \{2,...,K\}} \frac{\Delta_{(i)}}{i}$, and $H_2 = \min_{i \in \{2,...,K\}} \frac{\Delta^2_{(i)}}{i}$.

\end{proof}

\section{Proof of lower bound}

\subsection{Proof of Lemma \ref{lem:problem transformation}}
\label{subsection: proof of lemma problem transformation}

Suppose that there exists such an algorithm $\mathcal{A}$ satisfying that the error probability of $\mathcal{A}$ for solving Problem \ref{problem:PQSI_restatement} with the quantum state set $S_{\rho}$ whose error probability is less than $e_N$. Let $p_{S_\rho}(x_1,y_1;...;x_N,y_N;z)$ denote the probability of the event satisfying
\begin{enumerate}
    \item for $i \in {1,...,N}$, in the round $N$, the algorithm $\mathcal{A}$ select the $x_i$-th quantum state for measurement, and its output is $y_i$;
    \item the algorithm $A$ output $z$-th quantum state at the end. 
\end{enumerate}
Furthermore, let $q_{S_\rho}(x_1,y_1;...;x_N,y_N;z)$ denote the error probability corresponding to $p_{S_\rho}(x_1,y_1;...;x_N,y_N;z)$.

Since when using $\mathcal{A}$ to solve the Problem \ref{problem:PQSI_restatement}, error probability is less than $e_N$. Then for all quantum state set $S =\{\rho_1,...,\rho_K\}$, we have
\begin{equation}
    \int_{(x_1,y_1,...,x_N,y_N,z)} q(x_1,y_1;...;x_N,y_N;z) dp(x_1,y_1;...;x_N,y_N;z) \leq e_N.
\end{equation}

When the learner use the algorithm $\mathcal{A}$ to solve the problem \ref{problem:PRQSI}, its error probability satifying
\begin{equation}
    \begin{aligned}
        e^\mathcal{D}_{N} & = \int_{x \sim \mathcal{D'}} \int_{(x_1,y_1,...,x_N,y_N,z)} q_{S_{\mathcal{D}(x)}}(x_1,y_1;...;x_N,y_N;z) dp_{S_{\mathcal{D}(x)}}(x_1,y_1;...;x_N,y_N;z) dx\\
        & \leq \int_{x \sim \mathcal{D}'} e_N d_x \\
        & \leq e_N.
    \end{aligned}
\end{equation}

then we can prove that if there is an algorithm $\mathcal{A}$ can solve the Problem \ref{problem:PRQSI} with the error probability less than $e_N$, then it can solve the Problem \ref{problem:PQSI_restatement} with the error probability less than $e_N$. Furthermore, we can establish the proof by considering the contrapositive of this statement.

\subsection{proof of Lemma \ref{lem:larger_expectation}}
\label{subsection: proof of lemma larger expectation}
    Without loss of generality, assume that $M_0 = \arg\min_{M' \in \{M_0,M_1\}} \Tr(M')$.  According to the definition of POVM, there exists a unitary matrix \( V \) and a diagonal matrix $ \Sigma_0 = \diag(b_0,...,b_{d-1})$, where $b_0,...,b_{d-1} \in [0,1]$ such that
    \begin{equation*}
        \begin{aligned}
            M_0 & = V \Sigma_0 V^\dagger = \sum_{i=0}^{d-1} b_i V |i \rangle \langle i | V^\dagger, \\
            M_1 & = I - V \Sigma_0 V^\dagger = \sum_{i=0}^{d-1} (1-b_i) V |i \rangle \langle i | V^\dagger. \\
        \end{aligned}
    \end{equation*}
    We have
    \begin{equation}
        \begin{aligned}
            & \Tr(M_0) = \Tr (\Sigma_0) = \sum_{i=0}^{d-1} b_i, \\     
            & \Tr(M_1) = \Tr (I - \Sigma_0) = d - \sum_{i=0}^{d-1} b_i.
        \end{aligned}
    \end{equation}
    Let $ p_{\mathcal{M}}(M | U) $ denote the probability that $ M $ ``accepts" the quantum state $ \alpha U | 0 \rangle \langle 0 | U^\dagger + \frac{1-\alpha}{d-1} I_d $. According to the property of the Haar measure and the identity matrix $I_d$, we have 
    \begin{equation}
        \begin{aligned}
            & \mathbb{E}_{U \sim \Haar}\left[ p^2_{\mathcal{M}}(M_0| U)\right] \\
            = & \mathbb{E}_{U \sim \Haar}\left[\Tr^2\left(M_0 \left( \alpha U | 0 \rangle \langle 0 | U^\dagger + \frac{1-\alpha}{d} I_d\right)\right)\right] \\
            = & \mathbb{E}_{U \sim \Haar}\left[\Tr^2\left(\sum_{i=0}^{d-1} b_i V |i \rangle \langle i | V^\dagger \left( \alpha U | 0 \rangle \langle 0 | U^\dagger + \frac{1-\alpha}{d} I_d\right)\right)\right] \\
            = & \mathbb{E}_{U \sim \Haar}\left[\left(\sum_{i=0}^{d-1} b_i  \langle i | V^\dagger \left( \alpha U | 0 \rangle \langle 0 | U^\dagger + \frac{1-\alpha}{d} I_d)\right)V |i \rangle\right)^2\right] \\
            = & \mathbb{E}_{U \sim \Haar}\left[\left(\sum_{i=0}^{d-1}  \left(\alpha b_i  \langle i | V^\dagger U | 0 \rangle \langle 0 | U^\dagger V |i \rangle\right) + \sum_{i=0}^{d-1} b_i\frac{1-\alpha}{d} \langle i | V^\dagger I_dV |i \rangle\right)^2\right] \\
            = & \mathbb{E}_{U \sim \Haar}\left[\left(\sum_{i=0}^{d-1} \left(\alpha b_i  \langle i |U | 0 \rangle \langle 0 | U^\dagger|i \rangle\right) + \frac{1-\alpha}{d}\Tr(M_0)\right)^2\right] \\
        \end{aligned}
    \end{equation}
    Let $V_i$ is the matrix satisfying that $V_i|i\rangle = |0\rangle$, then $V_i$ is an unitary matrix and $V_i^{-1} = V_i^\dagger$, we have
    \begin{equation} \label{eq2:p_M0}
        \begin{aligned}
            &\mathbb{E}_{U \sim \Haar}\left[ p^2_{\mathcal{M}}(M_0| U)\right] \\
            = & \mathbb{E}_{U \sim \Haar}\left[\left(\sum_{i=0}^{d-1} \left(\alpha b_i  \langle i |U | 0 \rangle \langle 0 | U^\dagger|i \rangle\right) + \frac{1-\alpha}{d}\Tr(M_0)\right)^2\right] \\
            = & \mathbb{E}_{U \sim \Haar}\left[\left(\sum_{i=0}^{d-1} \left(\alpha b_i  \langle 0 |U^\dagger | i \rangle \langle i | U|0 \rangle\right) + \frac{1-\alpha}{d}\Tr(M_0)\right)^2\right] \\
            = & \mathbb{E}_{U \sim \Haar}\left[\left(\sum_{i=0}^{d-1} \left(\alpha b_i  \langle 0 |U | i \rangle \langle i | U^\dagger|0 \rangle\right) + \frac{1-\alpha}{d}\Tr(M_0)\right)^2\right] \\
            = & \mathbb{E}_{\psi \sim \mathbb{C}^d} \left[\left(\sum_{i=0}^{d-1} \left(\alpha b_i  \langle \psi | i \rangle \langle i |\psi \rangle\right) + \frac{1-\alpha}{d}\Tr(M_0)\right)^2\right] \\
            = & \mathbb{E}_{\psi \sim \mathbb{C}^d} \left[  \alpha^2 \sum_{i=0}^{d-1} b^2_i  \langle \psi | i \rangle \langle i |\psi \rangle\langle \psi | i \rangle \langle i |\psi \rangle + \alpha^2\sum_{i=0}^{d-1}\sum_{j \neq i}b_ib_j \langle \psi | i \rangle \langle i |\psi \rangle\langle \psi | j \rangle \langle j |\psi \rangle \right.\\
            &\left. + 2\sum_{i=0}^{d-1} \left(\alpha b_i  \langle \psi | i \rangle \langle i |\psi \rangle\right)\frac{1-\alpha}{d}\Tr(M_0) + \left(\frac{1-\alpha}{d}\right)^2\Tr^2(M_0)    \right] \\
        \end{aligned}
    \end{equation}
    According to the Lemma \ref{lem:Haar_meausrement_expectation}, we have for $i,j \in \{0,...,d-1\}$, $i\neq j$,
    \begin{equation} \label{eq:expectation_i}
        \mathbb{E}_{\psi \sim \mathbb{C}^d} \langle \psi | i \rangle \langle i |\psi \rangle= \frac{1}{d},
    \end{equation}
    and
    \begin{equation} \label{eq:expectation_ii}
        \mathbb{E}_{\psi \sim \mathbb{C}^d} \langle \psi | i \rangle \langle i |\psi \rangle\langle \psi | i \rangle \langle i |\psi \rangle= \frac{2}{d(d+1)},
    \end{equation}
    and similarly
    \begin{equation} \label{eq:expectation_ij}
        \mathbb{E}_{\psi \sim \mathbb{C}^d} \langle \psi | i \rangle \langle i |\psi \rangle\langle \psi | j \rangle \langle j |\psi \rangle= \frac{1}{d(d+1)}.
    \end{equation}
    According to Equation \eqref{eq2:p_M0}, \eqref{eq:expectation_i},\eqref{eq:expectation_ii} and \eqref{eq:expectation_ij}, we have 
    \begin{equation}
        \begin{aligned} \label{eq3:p_M0}
            &\mathbb{E}_{U \sim \Haar}\left[ p^2_{\mathcal{M}}(M_0| U)\right] \\
            = & \frac{2\alpha^2}{d(d+1)} \sum_{i=0}^{d-1} b^2_i   + \frac{\alpha^2}{d(d+1)}\sum_{i=0}^{d-1}\sum_{j \neq i}b_ib_j  + \frac{2\alpha(1-\alpha)}{d^2}\Tr(M_0) \sum_{i=0}^{d-1} b_i + \left(\frac{1-\alpha}{d}\right)^2\Tr^2(M_0)  \\
            = & \frac{\alpha^2}{d(d+1)} \sum_{i=0}^{d-1} b^2_i   + \frac{\alpha^2}{d(d+1)}\left(\sum_{i=0}^{d-1}b_i\right)^2  + \frac{2\alpha(1-\alpha)}{d^2}\Tr(M_0) \sum_{i=0}^{d-1} b_i + \left(\frac{1-\alpha}{d}\right)^2\Tr^2(M_0) \\
            = & \frac{\alpha^2}{d(d+1)} \sum_{i=0}^{d-1} b^2_i   + \frac{\alpha^2}{d(d+1)}\Tr^2(M_0)  + \frac{2\alpha(1-\alpha)}{d^2}\Tr^2(M_0) + \left(\frac{1-\alpha}{d}\right)^2\Tr^2(M_0) \\
             = & \frac{\alpha^2}{d(d+1)} \sum_{i=0}^{d-1} b^2_i   + \left[\frac{1}{d^2} - \frac{\alpha^2}{d^2(d+1)}\right]\Tr^2(M_0), \\
        \end{aligned}
    \end{equation}
    and
    \begin{equation}
        \begin{aligned}
            &\mathbb{E}_{U \sim \Haar}\left[ p_{\mathcal{M}}(M_0| U)\right] \\
            = & \mathbb{E}_{U \sim \Haar}\left[\Tr\left(M_0 \left( \alpha U | 0 \rangle \langle 0 | U^\dagger + \frac{1-\alpha}{d} I_d\right)\right)\right] \\
            = & \mathbb{E}_{U \sim \Haar}\left[\Tr\left(\sum_{i=0}^{d-1} b_i V |i \rangle \langle i | V^\dagger \left( \alpha U | 0 \rangle \langle 0 | U^\dagger + \frac{1-\alpha}{d} I_d\right)\right)\right] \\
            = & \mathbb{E}_{U \sim \Haar}\left[\Tr\left(\sum_{i=0}^{d-1} b_i V |i \rangle \langle i | V^\dagger \left( \alpha U | 0 \rangle \langle 0 | U^\dagger + \frac{1-\alpha}{d} I_d\right)\right)\right]
            = & \frac{\Tr(M_0)}{d}.
        \end{aligned}
    \end{equation}

Then the variance of $p_{\mathcal{M}}(M_0| U)$ is given by 

    \begin{equation}
        \begin{aligned}
            & \Var\left[p_{\mathcal{M}}(M_0| U)\right] \\
            = & \mathbb{E}_{U \sim \Haar}\left[ p^2_{\mathcal{M}}(M_0| U)\right] - \left(\mathbb{E}_{U \sim \Haar}\left[ p_{\mathcal{M}}(M_0| U)\right]\right)^2 \\
            = &\frac{\alpha^2}{d(d+1)} \sum_{i=0}^{d-1} b^2_i   + \left[\frac{1}{d^2} - \frac{\alpha^2}{d^2(d+1)}\right]\Tr^2(M_0) - \frac{\Tr^2(M_0)}{d^2} \\
            = &\frac{\alpha^2}{d(d+1)} \sum_{i=0}^{d-1} b^2_i   - \frac{\alpha^2}{d^2(d+1)}\Tr^2(M_0) \\
            = &\frac{\alpha^2}{d^2(d+1)} \left[d\sum_{i=0}^{d-1} b^2_i - \Tr^2(M_0)\right] \\
            \leq & \frac{\alpha^2}{d^2(d+1)} \left[d\sum_{i=0}^{d-1} b_i - \Tr^2(M_0)\right] \\
            = & \frac{\alpha^2}{d^2(d+1)} \left[d\Tr(M_0) - \Tr^2(M_0)\right] \\
            \leq & \frac{\alpha^2}{d^2(d+1)} \left[d\Tr(M_0) - \Tr^2(M_0)\right] \\ 
            \leq & \frac{\alpha^2\Tr(M_0)}{d(d+1)}.
        \end{aligned}
    \end{equation}

    From Chebyshev's Inequality, we have
    \begin{equation}
        \mathbb{P}_{U \sim \Haar} \left[ \left|p_{\mathcal{M}}(M_0| U) - \frac{\Tr(M_0)}{d}\right| \geq \frac{2\alpha\sqrt{\Tr(M_0)}}{d}\right]  < \frac{1}{4}.
    \end{equation}

    And
    \begin{equation}
        \begin{aligned}
            & p_{\mathcal{M}}(M_0| U)  \\
            = & \Tr\left(\sum_{i=0}^{d-1} b_i V |i \rangle \langle i | V^\dagger \left( \alpha U | 0 \rangle \langle 0 | U^\dagger + \frac{1-\alpha}{d} I_d\right)\right) \\
            = & \Tr\left(\sum_{i=0}^{d-1} b_i V |i \rangle \langle i | V^\dagger \left( \alpha U | 0 \rangle \langle 0 | U^\dagger + \frac{\alpha}{d} I_d\right)\right) + \Tr\left(\sum_{i=0}^{d-1} b_i V |i \rangle \langle i | V^\dagger \left( \frac{1}{d} I_d\right)\right)\\
            = & \alpha \Tr\left(\sum_{i=0}^{d-1} b_i V |i \rangle \langle i | V^\dagger \left( U | 0 \rangle \langle 0 | U^\dagger + \frac{1}{d} I_d\right)\right) + \frac{M_0}{d}.\\
        \end{aligned}
    \end{equation}
    Let $c(\mathcal{M},U) = \Tr\left(\sum_{i=0}^{d-1} b_i V |i \rangle \langle i | V^\dagger \left( U | 0 \rangle \langle 0 | U^\dagger + \frac{1}{d} I_d\right)\right)$, we have
    \begin{equation}
        p_{\mathcal{M}}(M_0| U) - \frac{M_0}{d} =  c(\mathcal{M},U) \alpha.
    \end{equation}

\subsection{Proof of Theorem \ref{thm:PQSI lower bound}}
\label{subsec: thm lower bound}

    We will use the following theorem to complete the proof:
    
    \begin{theorem}[see Theorem 4 of Ref. \cite{audibert2010best}]
        \label{theorem:classical_BAI}
        Let $\nu_1,...,\nu_K$ be Bernoulli distributions with parameters in $[a,1-a]$, $a \in(0,1/2)$. For any forecaster, there exists a permutation $\sigma: \{1,..., K\} \rightarrow \{1,..., K\}$ such that the probability error of the forecaster on the bandit problem defined by $\tilde{\nu}_1= \nu_{\sigma(1)},..., \tilde{\nu}_K= \nu_{\sigma(K)}$ satisfies
        \begin{equation}
            e_n \geq \exp\left(- \frac{(5+o(1))nH}{a(1-a)}\right),
        \end{equation}
        where $H =\min_i \frac{(\mathbb{E}[\nu^*] - \mathbb{E}[\nu_{(i)}])^2}{i}$.
    \end{theorem}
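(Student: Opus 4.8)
The plan is to prove this classical fixed-budget lower bound by reducing the $K$-armed identification task to a two-point hypothesis test and invoking a change-of-measure inequality, with a pigeonhole step producing the characteristic $1/i$ weighting inside $H$. Write the sorted means as $\mu_{(1)}>\mu_{(2)}\ge\cdots\ge\mu_{(K)}$ (so $\mathbb{E}[\nu^*]=\mu_{(1)}$), set $\Delta_{(i)}=\mu_{(1)}-\mu_{(i)}$, and let $i^\star$ attain $H=\min_i \Delta_{(i)}^2/i$, so that $n\Delta_{(i^\star)}^2/i^\star=nH$. Fix an arbitrary forecaster. I would take as the reference bandit $\mathcal{P}$ the instance in which arm $(1)$ is the unique optimum, and, for each suboptimal arm $b$ lying among the $i^\star$ best arms, introduce the competitor $\mathcal{Q}_b$ obtained from $\mathcal{P}$ by raising only arm $b$'s mean to $\mu_{(1)}+\varepsilon$, making $b$ the unique optimum of $\mathcal{Q}_b$ while leaving every other arm untouched. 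The adversarial permutation in the statement is then realized by relabeling so that the hard instance is presented; because $\mathcal{Q}_b$ changes a single arm by an amount $\le\Delta_{(i^\star)}+\varepsilon$, its complexity agrees with $H$ up to a $1+o(1)$ factor as $\varepsilon\to0$.

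The core step is the Bretagnolle--Huber inequality: for the trajectory laws of $\mathcal{P}$ and $\mathcal{Q}_b$ over the $n$-round interaction and any history event $A$,
\begin{equation}
    \mathbb{P}_{\mathcal{P}}(A)+\mathbb{P}_{\mathcal{Q}_b}(A^c)\ \ge\ \tfrac12\exp\!\left(-\mathrm{KL}(\mathcal{P}\,\|\,\mathcal{Q}_b)\right).
\end{equation}
Choosing $A=\{\text{the forecaster outputs }b\}$ bounds the left-hand side by $e_n(\mathcal{P})+e_n(\mathcal{Q}_b)$, since outputting $b$ is an error under $\mathcal{P}$ (where $b$ is suboptimal) and outputting anything other than $b$ is an error under $\mathcal{Q}_b$ (where $b$ is optimal); hence $\max\{e_n(\mathcal{P}),e_n(\mathcal{Q}_b)\}\ge\tfrac14\exp(-\mathrm{KL}(\mathcal{P}\|\mathcal{Q}_b))$. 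Because the two models share the same policy and differ only in the reward law of arm $b$, the chain rule for bandit trajectory divergences factorizes the KL exactly as
\begin{equation}
    \mathrm{KL}(\mathcal{P}\,\|\,\mathcal{Q}_b)\ =\ \mathbb{E}_{\mathcal{P}}[N_b]\cdot \mathrm{kl}\!\left(\mu_b,\ \mu_{(1)}+\varepsilon\right),
\end{equation}
where $N_b$ is the (adaptive, data-dependent) number of pulls of arm $b$ and $\mathrm{kl}(p,q)=p\ln\frac{p}{q}+(1-p)\ln\frac{1-p}{1-q}$.

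It remains to make both factors small. Among the $i^\star-1$ suboptimal arms inside the top $i^\star$, the expected pulls sum to at most $n$, so by pigeonhole some arm $b$ satisfies $\mathbb{E}_{\mathcal{P}}[N_b]\le n/(i^\star-1)$; this is the source of the $1/i$ factor. For that arm, $\mu_{(1)}+\varepsilon\in[a,1-a]$ (up to $\varepsilon$) yields $\mathrm{kl}(\mu_b,\mu_{(1)}+\varepsilon)\le \frac{(\mu_{(1)}+\varepsilon-\mu_b)^2}{(\mu_{(1)}+\varepsilon)(1-\mu_{(1)}-\varepsilon)}\le \frac{(\Delta_{(i^\star)}+\varepsilon)^2}{a(1-a)}$, using $\mu_b\ge\mu_{(i^\star)}$ and $q(1-q)\ge a(1-a)$ for $q\in[a,1-a]$. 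Combining, $\mathrm{KL}(\mathcal{P}\|\mathcal{Q}_b)\le (1+o(1))\frac{n\Delta_{(i^\star)}^2}{i^\star a(1-a)}=(1+o(1))\frac{nH}{a(1-a)}$, whence $e_n\ge\tfrac14\exp\!\big(-(1+o(1))\frac{nH}{a(1-a)}\big)$, which has the claimed form after absorbing the leading constant.

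I expect the genuine obstacle to be the tension created by the optimal arm's pull count. A naive permutation argument would swap arm $(1)$ with $b$, but then $\mathbb{E}_{\mathcal{P}}[N_{(1)}]$ --- which a good forecaster makes large --- enters the divergence and destroys the bound; the fix is to perturb only the \emph{under-sampled} near-optimal arm (promotion rather than transposition), so that exclusively its small expected count appears in the chain-rule factorization. The remaining delicate points, which is exactly where the constant degrades from $1$ to $5+o(1)$, are (i) keeping the perturbed means within $[a,1-a]$ while preserving the multiset/permutation formulation, which may require a compensating adjustment of a second arm and hence a two-term KL, and (ii) validating the expected-pulls identity for adaptive, history-dependent allocation via Wald's identity and the measure-change chain rule. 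Finally, the bound on the chosen competitor transfers to the permutation statement because the worst labeling can only increase the error.
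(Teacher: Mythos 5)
The paper itself contains no proof of this statement: it is imported verbatim (as Theorem 4 of Audibert--Bubeck--Munos) and used as a black box, so your attempt must be judged against the argument in that reference. Your individual ingredients are sound --- the Bretagnolle--Huber inequality, the divergence decomposition $\mathrm{KL}(\mathcal{P}\Vert\mathcal{Q}_b)=\mathbb{E}_{\mathcal{P}}[N_b]\,\mathrm{kl}(\mu_b,\mu_{(1)}+\varepsilon)$ for adaptive trajectories, the pigeonhole step giving $\mathbb{E}_{\mathcal{P}}[N_b]\le n/(i^\star-1)$, and the bound $\mathrm{kl}(p,q)\le (p-q)^2/(q(1-q))$ --- and they do combine into a correct fixed-budget lower bound. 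But it is a lower bound for a \emph{different} statement. Your hard alternative $\mathcal{Q}_b$ raises arm $b$'s mean to $\mu_{(1)}+\varepsilon$, which changes the multiset of distributions: no relabeling makes $\mathcal{Q}_b$ of the form $\nu_{\sigma(1)},\dots,\nu_{\sigma(K)}$, so your closing claim that ``the bound on the chosen competitor transfers to the permutation statement'' is false. The quoted theorem promises a hard instance that is a permutation of the \emph{given} $\nu_i$'s --- i.e., the forecaster fails even knowing the parameter multiset exactly --- and your two-point promotion argument cannot reach it. You identified the obstruction yourself: any permutation alternative that dethrones the best arm is a transposition involving arm $(1)$, so the divergence decomposition picks up the term $\mathbb{E}_{\mathcal{P}}[N_{(1)}]\,\mathrm{kl}(\mu_{(1)},\mu_b)$, which a forecaster can make of order $n\Delta_{(i^\star)}^2$ by sampling arm $(1)$ heavily; since arm $(1)$ is common to every such transposition, no pigeonhole over $b$ removes this term, and the $1/i$ factor in $H$ is lost. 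Deferring this as ``delicate point (i)'' leaves precisely the hard part of the theorem unproved; the suggested fix (a compensating second-arm adjustment with a two-term KL) is exactly the transposition that fails. The cited proof is genuinely multi-hypothesis: it treats the family of permuted instances simultaneously and performs the change of measure on high-probability events for the empirical likelihood ratios (arm $(1)$'s samples cannot look typical for $\nu_{(1)}$ and for $\nu_b$ at once), which is where the constant $5+o(1)$ and the $a(1-a)$ normalization come from; a single expected-KL two-point comparison does not suffice.

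Two smaller points. First, your perturbed instance can leave the parameter class: if $\mu_{(1)}=1-a$ (allowed by the hypothesis), then $\mu_{(1)}+\varepsilon\notin[a,1-a]$, so even the minimax version of your bound needs a one-sided perturbation argument or a restriction such as $\mu_{(1)}<1-a$; relatedly, the bound $\mathrm{kl}(\mu_b,\mu_{(1)}+\varepsilon)\le(\Delta_{(i^\star)}+\varepsilon)^2/(a(1-a))$ should be justified with $q(1-q)$ evaluated at $q=\mu_{(1)}+\varepsilon$, which works only as $\varepsilon\to 0$. Second, note that the weaker statement you actually prove (a lower bound on some instance in an $\varepsilon$-neighborhood, multiset not preserved) would likely still serve this paper's downstream application in Theorem \ref{thm:PQSI lower bound}, since the quantum construction there only needs \emph{some} hard configuration with the stated gap profile and an $\varepsilon$-perturbation of the purities is absorbed into the $O(\cdot)$; but as a proof of the theorem as quoted, the permutation requirement constitutes a genuine gap.
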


    Let $p^{\mathcal{A}}_{e}(\mathcal{M},U)$ denote the error probability for algorithm $\mathcal{A}$ to solve the problem \ref{problem:PQDSI_special}, with specific unitary matrix $U$ and POVM $\mathcal{M}$, and  $M = \min_{M' \in \{M_0,M_1\}} \Tr(M')$. Then the error probability of $\mathcal{A}$ to solve the problem \ref{problem:PQDSI_special} satisfying
    \begin{equation}
        \label{eq1:proof_lower_bound}
        \begin{aligned}
            e^\mathcal{A}_N = & \int_{\mathcal{M} \in \mathcal{D}_{\mathcal{M}}} \int_{U \sim \Haar} p^\mathcal{A}_e(\mathcal{M},U) d\mathcal{M} dU \\
            \geq & \int_{\mathcal{M} \in \mathcal{D}_{\mathcal{M}}} \int_{U \sim \Haar} p^\mathcal{A}_e(\mathcal{M},U)\indicator\{U \in \mathbb{U}_\mathcal{M}\} d\mathcal{M} dU.
        \end{aligned}
    \end{equation}
    The first line corresponds to the deifintion of $e_N^\mathcal{A}$, the second line corresponds to that $p_{e}^\mathcal{A}(\mathcal{M},U) \geq 0$ and $\mathbb{U}_{\mathcal{M}}$ is a subset of unitary matrix.
    
    When the $i$-th quantum state is measured using $\mathcal{M}$, the process in which the output result is accepted by $M$ follows a Bernoulli distribution with parameter $\Tr(M\rho_U)$. From Lemma \ref{lem:larger_expectation} and the definition of $c(\mathcal{M}, U)$, we have 
    \begin{equation}
        \Tr(M\rho_i|U) = c(U,\mathcal{M})\alpha_i + \frac{\Tr(M)}{d}.    
    \end{equation}
    If $c(\mathcal{M},U) >0$, we need to find the Bernoulli distribution with the largest parameter where the parameter of the $i$-th Bernoulli distribution is $\Tr(M\rho_i|U) = c(\mathcal{M},U)\alpha_i + \frac{\Tr(M)}{d}$. Then we have
    \begin{equation}
        \begin{aligned}
            & \Tr(M\rho_i|U) - \Tr(M\rho_j|U) \\
            = & c(\mathcal{M},U)\alpha_i  - c(\mathcal{M},U)\alpha_j \\
            = & c(\mathcal{M},U) \left[\sqrt {\frac{dz_i -1}{d-1}} - \sqrt{\frac{dz_j -1}{d-1}}\right].
        \end{aligned}
    \end{equation}
    Since $\Tr(M) > 16$, for $U \in \mathbb{U}(\mathcal{M})$ we have
    \begin{equation}
        \begin{aligned}
            \Tr(M\rho_i|U) & = c(\mathcal{M},U)\alpha_i + \frac{\Tr(M)}{d} \\
            & \geq - \frac{2\sqrt{\Tr(M)}}{d} + \frac{Tr(M)}{d} \geq \frac{\Tr(M)}{2d}.
        \end{aligned}
    \end{equation}
    And since $M = \arg\min_{M'\in \{M_0,M-1\}} \Tr(M')$, we have $\Tr(M) \leq \frac{1}{2}$, then for $U \in \mathbb{U}$ we have
    \begin{equation}
         \Tr(M\rho_i|U) \in \left[\frac{\Tr(M)}{2d}, 1-\frac{\Tr(M)}{2d}\right],
    \end{equation}
    and 
    \begin{equation}
        1- \frac{\Tr(M)}{2d} \geq \frac{1}{2}.
    \end{equation}
    According to Theorem \ref{theorem:classical_BAI} and the definition of $\mathbb{U}_\mathcal{M}$, for $U \in \mathbb{U}_M$ we have
    \begin{equation}\label{eq:proof_lower_bound1}
        \begin{aligned}
             p_e^\mathcal{A}(\mathcal{M},U) \geq & \exp\left(-O\left( \frac{N}{\Omega\left(\frac{\Tr(M)}{d}\right)\left(1-\Omega\left(\frac{\Tr(M)}{d}\right)\right)}\min_i \frac{\left(\Tr(M\rho^\star|U) - \Tr(M\rho_{(i)}|U)\right)^2}{i}\right)\right) \\
            = & \exp\left(-O\left( \frac{N c^2(\mathcal{M},U)}{\Omega\left(\frac{\Tr(M)}{d}\right)}\min_i \frac{\left(\sqrt{dz_{i^\star} -1} - \sqrt{dz_{(i)} -1}\right)^2}{i(d-1)}\right)\right) \\
            = & \exp\left(-O\left( \frac{N c^2(\mathcal{M},U)}{\Omega\left(\frac{\Tr(M)}{d}\right)}\min_i \frac{dz_{i^\star}-1 + dz_{(i)}-1 -2\sqrt{(dz_{i^\star}-1)(dz_{(i)}-1)}}{i(d-1)}\right)\right) \\
            \geq & \exp\left(-O\left( \frac{N c^2(\mathcal{M},U)}{\Omega\left(\frac{\Tr(M)}{d}\right)}\min_i \frac{[dz_{i^\star}-1] - [dz_{(i)}-1]}{i(d-1)}\right)\right) \\
            = & \exp\left(-O\left( \frac{N c^2(\mathcal{M},U)}{\Omega\left(\frac{\Tr(M)}{d}\right)}\min_i \frac{d\Delta_{(i)}}{i(d-1)}\right)\right) \\
        \end{aligned} 
    \end{equation}
    According to the definition of $\mathbb{U}_\mathcal{M}$, for $U \in \mathbb{U}_M$ we have 
    \begin{equation} \label{eq:proof_lower_bound2}
        c(\mathcal{M},U) \in \left(-\frac{2\sqrt{\Tr(M)}}{d}, \frac{2\sqrt{\Tr(M)}}{d}\right).
    \end{equation}
    According to Equation \eqref{eq:proof_lower_bound1} and Equation \eqref{eq:proof_lower_bound2}, we have
    \begin{equation}
    \label{eq2:proof_lower_bound}
        \begin{aligned}
            p_e^\mathcal{A}(U,\mathcal{M}) \geq & \exp\left(-O\left( \frac{N c^2(U,\mathcal{M})}{\Omega\left(\frac{\Tr(M)}{d}\right)}\min_i \frac{d\Delta_{(i)}}{i(d-1)}\right)\right) \\
            \geq & \exp\left(-O\left( \frac{N \left(\frac{\sqrt{\Tr(M)}}{d}\right)^2}{\Omega\left(\frac{\Tr(M)}{d}\right)}\min_i \frac{d\Delta_{(i)}}{i(d-1)}\right)\right) \\
            \geq & \exp\left(-O\left( \frac{N}{d}\min_i \frac{\Delta_{(i)}}{i}\right)\right) \\
            = & \exp\left(-O\left( \frac{NH_1}{d}\right)\right).
        \end{aligned}
    \end{equation}
    Similarly, if $c(U,\mathcal{M}) <=0$, we have
    \begin{equation}
    \label{eq3:proof_lower_bound}
        p_e^\mathcal{A}(U,\mathcal{M}) \geq \exp\left(- O\left(\frac{NH_1}{d}\right)\right).
    \end{equation}
    According to Equation \eqref{eq1:proof_lower_bound}, \eqref{eq2:proof_lower_bound}, \eqref{eq3:proof_lower_bound} we have
    \begin{equation*}
        \begin{aligned}
            e^\mathcal{A}_N \geq & \int_{\mathcal{M} \in \mathcal{D}_{\mathcal{M}}} \int_{U \sim \Haar} p^\mathcal{A}_e(\mathcal{M},U)\indicator\{U \in \mathbb{U}_\mathcal{M}\} d\mathcal{M} dU \\
            \geq & \int_{\mathcal{M} \in \mathcal{D}_{\mathcal{M}}} \int_{U \sim \Haar} \exp\left(-O\left( \frac{NH_1}{d}\right)\right) \indicator\{U \in \mathbb{U}_\mathcal{M}\} d\mathcal{M} dU \\
            \geq & \exp\left(-O\left( \frac{NH_1}{d}\right)\right) \int_{U \sim \Haar}  \indicator\{U \in \mathbb{U}_\mathcal{M}\} d\mathcal{M} dU \\
            \geq & \exp\left(-O\left( \frac{NH_1}{d}\right)\right)\int_{\mathcal{M} \in \mathcal{D}_{\mathcal{M}}}  \frac{3}{4} d\mathcal{M} \\
            \geq & \exp\left(- O\left(\frac{NH_1}{d}\right)\right).\\            
        \end{aligned}
    \end{equation*}
    Then for any algorithm $\mathcal{A}_{\mathcal{D}}$ addressing Problem \ref{problem:PQDSI_special} cannot identify the purest random quantum state with an error probability lower than $\exp\left(-O\left( \frac{NH_1}{d}\right)\right)$. According to Lemma \ref{lem:problem transformation}, we can complete the proof.
\section{Proof of Theorem \ref{thm:alg_CMPQSI_coherent}}
\label{sec:proof_alg_CMPQSI_coherent}

    By the definition of $w(\cdot,\cdot)$ and the definition of $\Delta_{(\cdot)}$, we have
    \begin{equation*}
        \begin{aligned}
            & \mathbb{P}(w({\rho^\star,k}) \leq w({\rho_{(i)},k})) \\
            = & \mathbb{P}\left( (w(\rho_{(i)},k) -  w(\rho^\star,k))  \geq \frac{\Delta_{(i)}}{2}\right).
        \end{aligned}
    \end{equation*}
    Since $x_{(\cdot,\cdot)} \in [0,1]$ and Hoeffding's inequality, we have
    \begin{equation}
        \begin{aligned}
            & \mathbb{P}\left( (w(\rho_{(i)},k) -  w(\rho^\star,k))  \geq \frac{\Delta_{(i)}}{2}\right) \\
            \leq & \exp\left(- \frac{N_k}{2} \left(\frac{\Delta_{(i)}}{2}\right)^2 \right) \\
            \leq & \exp\left( - \frac{N_k \Delta^2_{(i)}}{8}\right).
        \end{aligned}
    \end{equation}

    By a union bound of error probability, we have 

    \begin{equation} \label{eq2:err_upper1}
        \begin{aligned} 
            e_n & \leq \sum_{k=1}^{K-1}\sum_{i = K+1-k}^K \mathbb{P}(w({\rho^\star,k}) \leq w({\rho_{(i)},n_k})) \\
            & \leq \sum_{k=1}^{K-1}\sum_{i = K+1-k}^K \exp\left( - \frac{N_k \Delta^2_{(K+1-k)}}{8}\right) \\
            &  \leq \sum_{k=1}^{K-1} k \exp\left( - \frac{N_k \Delta^2_{(K+1-k)}}{8}\right). \\
        \end{aligned}
    \end{equation}
    By definition of $N_k$, we have
    \begin{equation} \label{eq2:err_upper2}
        \begin{aligned}
             N_k \Delta^2_{(K+1-k)} 
            =  \left\lceil \frac{1}{\overline{\log}(K)} \frac{N-K}{K+1-k}\right\rceil\Delta^2_{(K+1-k)} 
            \leq  \frac{N-K}{\overline{\log}(K)} \times \frac{\Delta^2_{(K+1-k)}}{K+1-k}. 
        \end{aligned}
    \end{equation}
    Combining equation \eqref{eq2:err_upper1} and \eqref{eq2:err_upper2}, we have
    \begin{equation*}
        \begin{aligned}
            e_n & \leq \sum_{k=1}^{K-1} k \exp\left( - \frac{N-K}{8\overline{\log}(K)} \times \frac{\Delta^2_{K+1-k}}{K+1-k}\right) \\
            & \leq \frac{K(K-1)}{2} \exp\left(-\frac{NH_2}{8\overline{\log}(K)}\right),
        \end{aligned}
    \end{equation*}
    where $H_2 = \min_{i \in \{1,...,K\}} \frac{\Delta^2_{(i)}}{i}$.

\end{document}